\newtheorem{theorem}{Theorem}
\newtheorem{lemma}{Lemma}
\newcommand{\figwidth}{8}
\let\mybibitem\bibitem
\renewcommand{\bibitem}[1]{%
	\ifstrequal{#1}{nature}
	{\color{blue}\mybibitem{#1}}
	{\color{black}\mybibitem{#1}}%
}
\begin{document}
%
\title{Joint 3-D Positioning and Power Allocation for UAV Relay Aided by Geographic Information} 

%
%
%
\author{Pengfei Yi,~\IEEEmembership{Graduate Student Member,~IEEE,}
	Liang Zhu,
	Lipeng Zhu,~\IEEEmembership{Member,~IEEE,}
    Zhenyu Xiao,~\IEEEmembership{Senior Member,~IEEE,}
    Zhu Han,~\IEEEmembership{Fellow,~IEEE,}
	and Xiang-Gen Xia,~\IEEEmembership{Fellow,~IEEE}

\thanks{This work was supported in part by the National Key Research and Development Program under grant number 2020YFB1806800, the National Natural Science Foundation of China (NSFC) under grant numbers 62171010 and 61827901, the Beijing Natural Science Foundation under grant number L212003, and NSF CNS-2107216 and CNS-2128368. The corresponding author is Dr. Zhenyu Xiao with Email xiaozy@buaa.edu.cn.}
\thanks{Pengfei Yi, Zhenyu Xiao are with the School of Electronic and Information Engineering, Beihang University, Beijing 100191, China. (yipengfei@buaa.edu.cn, zhulipeng@buaa.edu.cn, xiaozy@buaa.edu.cn)}
\thanks{Liang Zhu is with the IoT Division of CTTL-Terminals Laboratory,
	China Academy of Information and Communication Technology, Beijing 100191, China. (zhuliang@caict.ac.cn)}
\thanks{Lipeng Zhu is with the Department of Electrical and Computer Engineering, National University of Singapore, Singapore 117583 (zhulp@nus.edu.sg)}
\thanks{Zhu Han is with the Department of Electrical and Computer Engineering at the University of Houston, Houston, TX 77004 USA, and also with the Department of Computer Science and Engineering, Kyung Hee University, Seoul, South Korea, 446-701. (zhan2@uh.edu)}
\thanks{Xiang-Gen Xia is with the Department of Electrical and Computer Engineering, University of Delaware, Newark, DE 19716, USA. (xianggen@udel.edu)}
}

%
%

\maketitle

\begin{abstract}
In this paper, we study to employ geographic information to address the blockage problem of air-to-ground links between UAV and terrestrial nodes. In particular, a UAV relay is deployed to establish communication links from a ground base station to multiple ground users. To improve communication capacity, we first model the blockage effect caused by buildings according to the three-dimensional (3-D) geographic information. Then, an optimization problem is formulated to maximize the minimum capacity among users by jointly optimizing the 3-D position and power allocation of the UAV relay, under the constraints of link capacity, maximum transmit power, and blockage. To solve this complex non-convex problem, a two-loop optimization framework is developed based on Lagrangian relaxation. The outer-loop aims to obtain proper Lagrangian multipliers to ensure the solution of the Lagrangian problem converge to the tightest upper bound on the original problem. The inner-loop solves the Lagrangian problem by applying the block coordinate descent (BCD) and successive convex approximation (SCA) techniques, where UAV 3-D positioning and power allocation are alternately optimized in each iteration. Simulation results confirm that the proposed solution significantly outperforms three benchmark schemes and achieves a performance close to the upper bound on the UAV relay system.
\end{abstract}

\begin{IEEEkeywords}
UAV, relay communications, geographic information,  3-D positioning, power allocation.
\end{IEEEkeywords}

%
\IEEEpeerreviewmaketitle

\section{Introduction}
\IEEEPARstart{U}{nmanned} aerial vehicles (UAVs) have attracted increasing attention for enhancing the performance of wireless communication networks~\cite{gupta2016survey, zeng2019access, mozaffari2019atutor, fotouhi2019survey, xiao2020uavcom, song2021aerial}. Compared to conventional cellular systems, UAV-assisted  communications do not rely on fixed terrestrial infrastructures, and can be flexibly deployed over a target area in an on-demand and cost-effective manner. For instance, UAVs may serve as aerial base stations (BSs) or relays to support ground user equipments (UEs), as well as extend communication coverage in hot-spot regions or disaster areas. 

Benefiting from their three-dimensional (3-D) mobility, UAVs may adjust their positions or trajectories according to traffic demands to provide satisfactory communication service. As a result, a large number of works have been devoted to UAV-assisted communication systems~\cite{wu2018jointt, xu2020multiu, zhu2020millim, cai2018dualua, li2020datara, zhang2019securi,wang2018spectrum, liu2021resour, xiao2020unmann, zhang20213ddepl,zeng2021trajec}. 
For instance, in~\cite{wu2018jointt}, two-dimensional (2-D) trajectory, multiuser scheduling and association, and transmit power were jointly designed for a downlink multi-UAV enabled communication system, aiming to maximize the minimum throughout among all ground UEs.  
In~\cite{xu2020multiu},  2-D UAV trajectory and transmit beamforming vector were jointly optimized to minimize the total power consumption for multiuser downlink multiple-input single-output (MISO) UAV communication systems. UAV jittering, UE location uncertainty, wind speed uncertainty, and no-fly zones were taken into account to provide reliable communication services. 
In~\cite{zhu2020millim}, UAV positioning, analog beamforming, and power control were jointly optimized for a millimeter-wave full-duplex UAV relay communication system.  
A secure communication problem was investigated for a UAV-enabled communication system in~\cite{cai2018dualua}, where UAV trajectory and UE scheduling were jointly designed for maximizing the minimum secrecy rate. 
In~\cite{wang2018spectrum}, UAV trajectory and transmit power were jointly designed to achieve efficient spectrum sharing for a full-duplex UAV relaying system with underlaid device-to-device (D2D) communications.
A UAV-enabled energy-efficient Internet of Things (IoT) communication system was considered in~\cite{liu2021resour}, where 3-D placement and resource allocation of multiple UAV-mounted BSs were optimized to minimize the transmit power of IoT devices. 
In~\cite{zhang20213ddepl}, 3-D positions of UAVs, UE clustering, and frequency band allocation were optimized to improve the coverage rate as well as to minimize the number of required UAVs for a multiple UAV-BSs assisted communication system.

To make the problem more tractable, the aforementioned works mostly employ simplified or statistical channel models for UAV-assisted communication systems. For example, pure line-of-sight (LoS) channels are assumed in~\cite{wu2018jointt, xu2020multiu, cai2018dualua, zhang2019securi, li2020datara,wang2018spectrum}. Probabilistic LoS channels are adopted in~\cite{xiao2020unmann, liu2021resour, zeng2021trajec, zhang20213ddepl}, where LoS and non-LoS (NLoS) conditions are probabilistically related to the elevation angle of the link between the UAV and ground UE. 
However, the simplified and statistical channel models are only suitable for average performance analysis, while corresponding solutions may fail to provide satisfactory performance for practical UEs in specific environments. In practice, the terrain conditions,  such as  buildings and other obstacles, may block  air-to-ground signals and greatly deteriorate the propagation environment, especially for urban areas with dense buildings~\cite{bai2014analys}. 
In such a case, the design and optimization for UAV-assisted communication systems based on pure LoS channels or statistical channels may not be able to guarantee the performance of ground UEs.

\textcolor{black}{To overcome the drawback for using pure LoS channel models or probabilistic LoS channel models, two kinds of information are useful to capture practical propagation conditions, namely \emph{radio map} and \emph{geographic information} (or called a 3-D city map). Radio map, which is constructed based on large numbers of real-life channel measurements, can precisely describe the average signal strength for all combinations of UAV-UE/BS locations~\cite{chen2017learni}. Based on radio map, UAV positioning, UE association, and backhaul capacity allocation were jointly optimized to achieve higher capacity in a UAV-aided relay network in~\cite{liu2019optimi}. 
In~\cite{hu2020lowcom},	a joint trajectory design, UE scheduling, and bandwidth allocation optimization were investigated to ensure the fairness among UEs in a UAV-aided relay network. A trajectory design problem for cellular-connected UAV  was considered in~\cite{zhang2021radiom}, where radio map was utilized to evaluate the link quality between UAV and BS during the flight. Although radio map is theoretically appealing to provide the most precise channel quality, its practical application in UAV positioning is not easy. On one hand, it is difficult to obtain sufficiently accurate assessments of channel quality for all possible combinations of UAV-UE/BS 3-D  positions. One the other hand, even if a radio map is available, it may lead to high memory and computational cost related to data processing. In addition to radio map, geographic information is also helpful for efficient UAV positioning. In fact, with the geographic information available, the LoS/NLoS condition can be inferred directly by evaluating the BS-UAV/UAV-UE link blocked by ground buildings, instead of being modeled as a random event~\cite{zeng2019access}. Note that compared to radio map, geographic information is easier to obtain in reality. For example, geographic information can be derived offline from digital maps~\cite{kim2018intera}, such as Google Maps, or constructed online using photogrammetry techniques~\cite{zhou2013comple}. Based on geographic information, a \emph{local} LoS probability model was studied in~\cite{esrafilian2019learni} by employing the map-compression method, and then the UAV trajectory was designed to increase the data throughput in a UAV-aided wireless network. In~\cite{zhao2020effici}, a geometric analysis method to detect the blockage was proposed, and then a greedy UE scheduling algorithm was performed to avoid the blockage and enhance the spectral efficiency of a multi-UAV communication system. In~\cite{esrafilian2021threed}, UAV trajectory was designed for outdoor UE localization based on received signal strength measurements, where geographic information was utilized to distinguish LoS/NLoS environments.
}

Different from the above mentioned works, the goal of this paper is to model the blockage effect as tractable constraints, and jointly design with UAV 3-D positioning and resource allocation in an optimization manner, to guarantee practical communication performance in an arbitrary local area. Specifically, we consider a downlink multiuser communication system, where a UAV acts as an aerial relay to receive signals from the BS and forward them to ground UEs. Assisted by geographic information, we show that the blocked region for a BS-UAV/UAV-UE link caused by a ground building can be modeled as a polyhedron. Therefore, LoS and NLoS propagation conditions can be distinguished by evaluating whether the UAV is positioned in the blocked region.
Considering the blockage effect, an optimization problem is formulated to maximize the minimum communication capacity among ground UEs, and the corresponding solution is developed. The main contributions of this paper are summarized as follows.

\begin{enumerate}
  \item We propose to deploy a UAV relay to improve the performance of a multiuser communication system. Assisted by geographic information, the blockage effect caused by buildings is derived to guarantee practical performance. To ensure fairness, we formulate an optimization problem to maximize the minimum communication capacity among UEs by jointly designing the positioning and the power allocation at the UAV, where the backhaul link from the ground BS to the UAV relay is also taken into consideration.
  
  \item {\color{black}To solve the formulated non-convex problem which is very difficult to obtain the globally optimal solution, a two-loop optimization framework is developed based on Lagrangian relaxation to obtain a sub-optimal solution.} Specifically, the proper Lagrangian multipliers are adaptively updated in the outer-loop to ensure the Lagrangian problem converge to the tightest upper bound on the original problem. The inner-loop partitions the Lagrangian problem into a power allocation sub-problem and a UAV positioning sub-problem, and optimizes them alternately by applying the block coordinate descent (BCD) technique. The optimal power allocation is obtained in closed form, while the UAV positioning sub-problem is solved approximately by utilizing the successive convex approximation (SCA) technique.
  
  \item Simulation results show that the proposed geographic information-aided joint positioning and power allocation scheme can closely approach a performance upper bound on the considered UAV relay system and outperform {\color{black}{three}} benchmark schemes.
\end{enumerate}



The rest of this paper is organized as follows. In Section~\ref{sec_systemModel}, we introduce the system model, and formulate the proposed joint positioning and power allocation problem. In Section~\ref{sec_problem}, we transform the original problem into a more tractable Lagrangian problem. In Section~\ref{sec_solution_relaxed}, we propose a solution for the Lagrangian problem. The overall solution, convergence analysis, and computational complexity are discussed in Section~\ref{sec_overall}.  Section~\ref{sec_simulation} presents the simulation results. Finally, the paper is concluded in Section~\ref{sec_conclusion}.

\textit{Notation}: $a$, $\mathbf{a}$, $\mathbf{A}$, and $\mathcal{A}$ denote a scalar, a vector, a matrix, and a set, respectively. $\mathbb{R}^M$ denotes the space of $M$-dimensional real vector. $(\cdot)^{\rm{T}}$, $(\cdot)^{*}$, and $(\cdot)^{\rm{H}}$ denote transpose, conjugate, and conjugate transpose, respectively. $\|\mathbf{a}\|$ represents the Euclidean norm of vector $\mathbf{a}$. 
$\mathcal{A}_2 \setminus \mathcal{A}_1$ represents the elements of $\mathcal{A}_2$ that are not included in $\mathcal{A}_1$.
{\color{black}
$\mathcal{A}_1 \bigcup \mathcal{A}_2$ represents the union of $\mathcal{A}_1$ and $\mathcal{A}_2$. $\mathcal{A}_1 \subseteq \mathcal{A}_2$ represents that $\mathcal{A}_1$ is a subset of $\mathcal{A}_2$.}
$|\mathcal{A}|$ denotes the cardinality of set $\mathcal{A}$. $\emptyset$ denotes empty set. $\overrightarrow{AB}$ denotes the vector from point $A$ to point $B$. $\overrightarrow{AB} \cdot \overrightarrow{CD}$ and $\overrightarrow{AB} \times \overrightarrow{CD}$ denote the inner product and outer product between vector $\overrightarrow{AB}$ and vector $\overrightarrow{CD}$, respectively.

\section{System Model and Problem Formulation}\label{sec_systemModel}

As shown in Fig.~\ref{fig:system}, we consider a downlink communication network with a single BS and $K$ ground UEs indexed by $\mathcal{K}\triangleq \{1,...,K\}$. To improve the communication performance of the system, one UAV is employed as a decode-and-forward (DF) relay. That is to say, the BS transmits signals to the UAV relay, and then the UAV relay forwards the signals to the ground UEs in a half-duplex mode. 
\textcolor{black}{
We assume that the UAV allocates orthogonal frequency bands to the ground UEs. As a result, the interference among the UEs can be eliminated. In addition, it is assumed that the geographic information, i.e., the building structure, of the considered area is available to the system\footnote{\color{black}One publicly accessible way to extract building structure information is to use an open source geographic database \emph{OpenStreetMap}, where the contour and height of a building are given by its raw data tagged with \emph{geometry} and \emph{height}, respectively~\cite{kang2018buildi}.}.
} 

Without loss of generality, we employ a 3-D Cartesian coordinate system. The BS is located at $\mathbf{x}_{\mathrm{B}}^{} \in \mathbb{R}^3$. It is assumed that each UE $k$ is statically located at areas without buildings and its coordinates are denoted by $\mathbf{x}_k \in \mathbb{R}^3, k \in \mathcal{K}$. The coordinates of the UAV are given by $\mathbf{x}=(x_\mathrm{V},y_\mathrm{V},h_\mathrm{V}) \in \mathbb{R}^3$. $M$ buildings indexed by $\mathcal{M}\triangleq \{1,...,M\}$ are randomly distributed in the considered area. The BS-UAV and UAV-UE links may be blocked by these buildings. As a result, the position of the UAV relay should be carefully designed to avoid blockage. We suppose that the UAV hovers at an altitude above the tallest building in the considered region, such that no collision will occur. 

\begin{figure}[t]
\begin{center}
  \includegraphics[width=\figwidth cm]{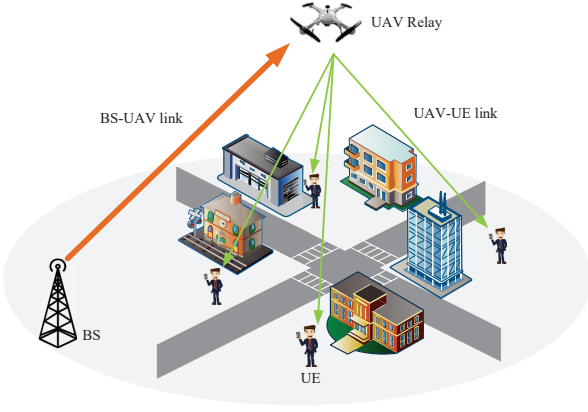}
  \caption{Illustration of the considered UAV relay communication system.}
  \label{fig:system}
\end{center}
\end{figure}

\subsection{Blockage Modeling}
To distinguish different propagation scenarios, the key is to identify whether the communication link is blocked by buildings. For $M$ buildings, $M$ and $MK$ blocked regions are generated with respect to the BS and $K$ UEs, respectively. We index these blocked regions as $\mathcal{I}\triangleq \{1,...,M+MK\}$. In this paper, we assume that the shape of all the buildings is cube\footnote{For other types of building, we can always find a cube envelope which covers the building, and thus the proposed solution is workable.}. We show that giving building and UE/BS locations, the blocked region can be formulated as a polyhedron. 

\subsubsection{Blocked Region for the UE}
As shown in Fig.~\ref{fig:surface_example}, along the UE's line of sight, only a part of the flank surfaces of each building is visible, while other flank surfaces and the top surface are invisible to the UE. In other words, the region behind the visible flank surfaces of the building is blocked. Therefore, in order to identify the blocked region for a UE with respect to a building, the key steps are a) identifying visible flank surfaces of the building, and b) determining the boundaries of the blocked region.

According to basic geometry, the inner product between the outward normal vector of a  surface and the LoS vector from a UE to any point on the surface can be utilized to judge whether the surface is visible. If the inner product is negative, the surface is \emph{visible}. Otherwise, the surface is \emph{invisible}~\cite{Hughes2013comput}. A toy example is shown in Fig.~\ref{fig:surface_example} (a), where surface $A_1B_1B_2A_2$ and surface $B_1C_1C_2B_2$ are visible to UE $S$.  The aerial view in Fig.~\ref{fig:surface_example} (b) gives a more visualized interpretation for the visible surfaces. 
\begin{figure}[t]
	\begin{center}
		\includegraphics[width=\figwidth cm]{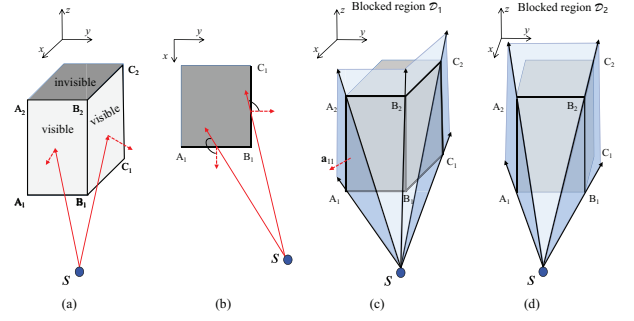}
		\caption{Illustration of the blockage caused by building.}
		\label{fig:surface_example}
	\end{center}
\end{figure}

As shown in Figs.~\ref{fig:surface_example} (c) and (d), the boundaries of the blocked region are composed by two vertical hyperplanes and two (or one) oblique hyperplanes\footnote{The number of oblique hyperplanes equals to the number of visible flank surfaces.}, where each hyperplane is determined by the position of the UE and two adjacent vertices on the flank surface of the building. 
Since a hyperplane determines two halfspaces, by applying analytic geometry theory, the blocked region can be represented by the intersection of a finite number of halfspaces, which is polyhedron~\cite{boyd2004convex}. The $i$-th blocked region is
\begin{equation}
	\mathcal{D}_i=\{ \mathbf{x} \in \mathbb{R}^3 | \mathbf{a}_{ij}^{\rm{T}} \mathbf{x} -b_{ij} \le 0, \forall j\in \mathcal{J}_i \},
\end{equation}
where $\mathcal{J}_i$ is the set of indexes of hyperplanes for the $i$-th blocked region. $|\mathcal{J}_i|$ equals to four or three, corresponding to situations in Figs.~\ref{fig:surface_example} (c) and (d). $\mathbf{a}_{ij}\in\mathbb{R}^3$ is the outward normal vector of hyperplane $j$, which can be obtained by the outer product of any two non-collinear vectors in the plane, pointing to the outside of the blocked region. Constant $b_{ij} \in \mathbb{R}$ determines the offset of hyperplane $j$ from the origin point, which can be obtained by the inner product between $\mathbf{a}_{ij}$ and any point in the hyperplane. For example, in Fig.~\ref{fig:surface_example} (c), the outer boundaries of the blocked region $\mathcal{D}_1$ are hyperplane $SA_1A_2$, $SA_2B_2$, $SB_2C_2$, and $SC_2C_1$. The outward normal vector $\mathbf{a}_{11}=\frac{\overrightarrow{SA_2} \times \overrightarrow{SA_1}}{\|\overrightarrow{SA_2} \times \overrightarrow{SA_1}\|}$, and the corresponding offset $b_{11}=\mathbf{a}_{11}^{\rm{T}} \cdot \overrightarrow{OS}$, where $O$ denotes the origin point.

In summary, given the location of the UE and the vertices of a building, the algorithm for obtaining the blocked region is shown in Algorithm~\ref{alg:blckage_formula}.
\begin{algorithm}[t] \small
	\label{alg:blckage_formula}
	\caption{Blocked region.}
	\begin{algorithmic}[1]
		\REQUIRE ~The location of a UE and the vertices of a building.
		\ENSURE ~The blocked region $\mathcal{D}_i$. \\
		\FOR {\textbf{each} flank surface of the building}
		\STATE Calculate the outward normal vector, and select a point on the surface to form an LoS vector with the UE.
		\IF  {the inner product between the two vectors is negative}
		\STATE Set the surface as \emph{visible}.
		\ELSE
		\STATE Set the surface as \emph{invisible}.
		\ENDIF
		\ENDFOR
		\STATE Record all the edges of the visible surfaces, except for the bottom edges.
		\FOR {\textbf{each} recorded edge}
		\STATE Construct hyperplane determined by the edge and the UE's position. Denote corresponding outward normal vector as $\mathbf{a}_{ij}$ and offset as $b_{ij}$, $j\in \mathcal{J}_i$.
		\ENDFOR		
		\RETURN The blocked region is $\mathcal{D}_i=\{ \mathbf{x} \in \mathbb{R}^3 | \mathbf{a}_{ij}^{\rm{T}} \mathbf{x} -b_{ij} \le 0, \forall j\in \mathcal{J}_i \}$.
	\end{algorithmic}
\end{algorithm}

\subsubsection{Blocked Region for the BS}
Depending on the relative height of the BS and the buildings, there are two possible situations for the blocked region. If the BS antenna is deployed higher than {a building, BS-UAV link is not blocked by the building for all possible UAV positions.} 
The blocked region is an empty set, i.e., $\mathcal{D}_i=\emptyset$. On the other hand, if the BS antenna is lower than {a building}, the blockage modeling method for the UE in Algorithm~\ref{alg:blckage_formula} can be directly used for the BS.

\subsection{Channel Model}
{\color{black}On one hand, accurate state of NLoS components for BS-UAV/UAV-UE links is difficult to predict in practice, unless the UAV is deployed at a given fixed position and channel estimation is performed. On the other hand, the NLoS environment leads to dramatic deterioration of communication performance. Therefore, it is preferred to avoiding all blocked regions  ($\mathcal{D}_i, i\in \mathcal{I}$) to guarantee the overall performance.} 
{\color{black}
Let $\mathcal{D}=\{ \mathbf{x}=(x_\mathrm{V},y_\mathrm{V},h_\mathrm{V}) | x_\mathrm{V}\in [0,x_\mathrm{D}], y_\mathrm{V}\in [0,y_\mathrm{D}], h_\mathrm{V}\geq h_\mathrm{min} \}$ denote the whole considered region, then the position of the UAV relay should be restricted by
\begin{align}
\mathbf{x} \in \mathcal{D} \setminus \mathcal{D}_i, \forall i \in \mathcal{I}. \label{c:block}	
\end{align}
Under constraint (\ref{c:block}), the LoS propagation environment is always guaranteed for the BS-UAV and UAV-UE links\footnote{We assume that all UEs are located at areas without buildings. Therefore, the feasible region for UAV positioning is nonempty as there always exists an unblocked region for sufficient high altitude of the UAV relay.}. 
In this way, LoS channel models are adopted for UAV positioning design, while NLoS channels are not necessary in the design but required in the simulations.} Since the UAV positioning should be optimized in a relatively large timescale, we mainly focus on the large-scale channel characteristics.
 The distance from the BS to the UAV relay is $\|\mathbf{x}-\mathbf{x}_\mathrm{B}\|$, and the distance from the UAV relay to UE $k$ is $\|\mathbf{x}-\mathbf{x}_k\|$. Then the channel gains of BS-UAV and UAV-UE links are respectively given by
{
\begin{equation}\label{eq_channel_BS}
	g_{\mathrm{B}}^{}=
	\beta_1 \|\mathbf{x}-\mathbf{x}_\mathrm{B}\|^{-\alpha_1}, 
\end{equation}
\begin{equation}\label{eq_channel_User}
	g_k=
	\beta_1 \|\mathbf{x}-\mathbf{x}_k\|^{-\alpha_1}, 
\end{equation}}
where \textcolor{black}{$\alpha_1$ is} the path loss exponent, and  \textcolor{black}{$\beta_1$ is} the channel gain at reference distance of $1$ m.


As a result, we  obtain the communication capacities of the BS-UAV link and the UAV-UE link as follows:
\begin{equation}
R_{\mathrm{B}}= W_{\mathrm{B}} \log_2\left(1+\frac{P_{\mathrm{B}} \beta_1}{N_0 W_{\mathrm{B}} \|\mathbf{x}-\mathbf{x}_{\mathrm{B}}\|^{\alpha_1}}\right),
\end{equation}	
\begin{equation}
R_k= W_\mathrm{U} \log_2\left(1+\frac{P_k \beta_1}{N_0 W_\mathrm{U} \|\mathbf{x}-\mathbf{x}_k\|^{\alpha_1}}\right), 
\end{equation}
where $W_\mathrm{B}$ and $W_\mathrm{U}$ denote the channel bandwidths of links from the BS to the UAV and from the UAV to UE $k$, respectively. $P_\mathrm{B}$ denotes the transmit power of BS, and $P_k$ is the transmit power of the UAV allocated to UE $k$. $N_0$ is the power spectral density of additive white Gaussian noise.

\subsection{Problem Formulation}
To ensure the fairness, we aim to maximize the minimum communication capacity among all the UEs by jointly optimizing the UAV positioning and the power allocation, subject to the blockage effect of buildings, and the backhaul constraint. 
The problem is formulated as 
\begin{align}
	\max\limits_{\mathbf{x}, P_{\mathrm{B}}, \{P_k\}}~~ & R      \label{eq_problem}\\
	\mbox{s.t.}~~ &(\ref{c:block}),   \notag \\
	&R \leq R_k, \forall k\in \mathcal{K},   \tag{\ref{eq_problem}a} \label{c:rate1} \\
&K R \leq R_{\mathrm{B}}^{},  \tag{\ref{eq_problem}b}\label{c:rate2}  \\
&P_{k} \geq 0, \forall k\in \mathcal{K},  \tag{\ref{eq_problem}c}\label{c:p1}  \\
&\sum_{k\in \mathcal{K}} P_{k} \leq P_{\mathrm{V}}^{\mathrm{tot}},  \tag{\ref{eq_problem}d}\label{c:p2} \\
&0 \leq P_{\mathrm{B}}^{} \leq P_{\mathrm{B}}^{\mathrm{tot}},  \tag{\ref{eq_problem}e}\label{c:p3} 
\end{align}
where $R$ denotes the minimum communication capacity among all the UEs. {\color{black}Constraint (\ref{c:block}) confines that the UAV has to be deployed to avoid all the blocked regions for the BS-UAV and UAV-UE links.}
Constraint (\ref{c:rate1}) indicates that the minimum communication capacity does not exceed the communication capacity of each UAV-UE link. Constraint (\ref{c:rate2}) ensures that the BS-UAV backhaul link is capable to support all UEs with the minimum communication capacity. Constraints (\ref{c:p1}), (\ref{c:p2}), and (\ref{c:p3}) indicate that the transmit powers are nonnegative and do not exceed a maximum value, where $P_{\mathrm{B}}^{\mathrm{tot}}$ and $P_{\mathrm{V}}^{\mathrm{tot}}$ are the maximum transmit powers of the BS and the UAV relay, respectively.

\section{Problem Transformation}\label{sec_problem}
Due to the non-convex and intractable constraints, the original problem~(\ref{eq_problem}) is challenging to solve. In this section, we first transform the blockage constraint into tractable mixed-integer linear constraints. Then, Lagrangian relaxation is introduced to relax the binary variables to continuous variables  thus leading to a Lagrangian problem. 

\subsection{Transformation of the Blockage Constraint}\label{subsec_blockage}

To make constraint~(\ref{c:block}) more tractable, we introduce auxiliary binary variable $l_{ij}$ and provide the following theorem.

\begin{theorem} \label{Theo_blockage}
For each $i\in \mathcal{I}, \mathbf{x} \in \mathcal{D} \setminus \mathcal{D}_i$ is equivalent to 
\begin{equation}\label{eq_theo_block} 
	\left \{
	\begin{aligned}
		& \mathbf{a}_{ij}^T \mathbf{x} -b_{ij} +Cl_{ij} > 0,  \forall j\in \mathcal{J}_i,\\
	& l_{ij}\in \{0,1\},  \forall j\in \mathcal{J}_i,  \\
	& \sum_{j \in \mathcal{J}_i} l_{ij}\leq |\mathcal{J}_i|-1,\\ 
	& \mathbf{x} \in \mathcal{D},
	\end{aligned}
	\right.
\end{equation}
where $C$ is a sufficiently large constant.
\end{theorem}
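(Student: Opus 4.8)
The plan is to prove the equivalence by establishing both directions of set membership, treating the mixed-integer system as a standard \emph{big-$M$} (here big-$C$) encoding of the logical disjunction hidden in the condition $\mathbf{x}\notin\mathcal{D}_i$. The starting observation is that, since $\mathcal{D}_i$ is defined as an intersection of halfspaces, the statement $\mathbf{x}\notin\mathcal{D}_i$ is logically equivalent to the assertion that \emph{at least one} of the defining inequalities is strictly violated, i.e. there exists $j\in\mathcal{J}_i$ with $\mathbf{a}_{ij}^T\mathbf{x}-b_{ij}>0$. (The complement of the non-strict condition $\le 0$ is exactly the strict condition $>0$, which is why the strict inequality in (\ref{eq_theo_block}) is the right one.) The membership $\mathbf{x}\in\mathcal{D}$ is carried over verbatim as the last line of (\ref{eq_theo_block}), so the real content is showing that the first three lines encode precisely the disjunction ``$\exists\,j:\ \mathbf{a}_{ij}^T\mathbf{x}-b_{ij}>0$''.

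For the forward direction I would assume $\mathbf{x}\in\mathcal{D}\setminus\mathcal{D}_i$ and construct an explicit witness assignment of the binary variables: set $l_{ij}=0$ for every index $j$ at which the inequality is strictly violated, and $l_{ij}=1$ otherwise. With this choice the first inequality of (\ref{eq_theo_block}) holds for each $j$ by a case split: when $l_{ij}=0$ it reduces to the strict violation itself, and when $l_{ij}=1$ it holds because $C$ is large enough that $\mathbf{a}_{ij}^T\mathbf{x}-b_{ij}+C>0$ throughout $\mathcal{D}$. Since $\mathbf{x}\notin\mathcal{D}_i$ guarantees at least one strictly violated index, at least one $l_{ij}=0$, which gives $\sum_{j\in\mathcal{J}_i} l_{ij}\le|\mathcal{J}_i|-1$, and $\mathbf{x}\in\mathcal{D}$ is inherited directly. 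For the reverse direction I would assume (\ref{eq_theo_block}) holds: the cardinality constraint together with $l_{ij}\in\{0,1\}$ forces at least one index $j_0$ with $l_{ij_0}=0$, and for that index the first inequality collapses to $\mathbf{a}_{ij_0}^T\mathbf{x}-b_{ij_0}>0$, certifying $\mathbf{x}\notin\mathcal{D}_i$; combined with $\mathbf{x}\in\mathcal{D}$ this yields $\mathbf{x}\in\mathcal{D}\setminus\mathcal{D}_i$.

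The step I expect to require the most care is justifying the existence of a finite ``sufficiently large'' $C$ that makes the $l_{ij}=1$ branch vacuously true over the \emph{entire} feasible region $\mathcal{D}$. Because $\mathcal{D}$ is unbounded in the altitude coordinate $h_\mathrm{V}$, such a uniform $C$ exists only if $\mathbf{a}_{ij}^T\mathbf{x}-b_{ij}$ is bounded below on $\mathcal{D}$ for every $j$, i.e. only if no outward normal $\mathbf{a}_{ij}$ has a strictly negative vertical component. I would argue this from the geometry fixed by Algorithm~\ref{alg:blckage_formula}: the vertical boundary hyperplanes have purely horizontal normals (zero vertical component) while $x_\mathrm{V},y_\mathrm{V}$ are bounded in $\mathcal{D}$, and each oblique hyperplane bounds the shadow region from above, so its outward normal points upward and has a nonnegative vertical component. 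Hence $\inf_{\mathbf{x}\in\mathcal{D}}(\mathbf{a}_{ij}^T\mathbf{x}-b_{ij})$ is finite, and taking $C$ larger than the maximum of its negatives over the finitely many $j\in\mathcal{J}_i$ and finitely many blocked regions $i\in\mathcal{I}$ validates the construction. I would close by noting that the equivalence is exact rather than a relaxation, so replacing constraint (\ref{c:block}) by (\ref{eq_theo_block}) incurs no loss and only trades the geometric set-difference for linear constraints amenable to the subsequent Lagrangian treatment.
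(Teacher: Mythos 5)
Your proposal is correct and follows essentially the same argument as the paper: both directions are proved by the same witness construction (set $l_{ij}=0$ exactly at strictly violated indices, $l_{ij}=1$ elsewhere) and the same extraction of a violated index from the cardinality constraint. Your additional care in justifying that a finite uniform $C$ exists on the unbounded region $\mathcal{D}$ (via the sign of the vertical components of the normals $\mathbf{a}_{ij}$) goes beyond the paper, which simply asserts ``sufficiently large $C$'' and only implicitly relies on this finiteness when it later instantiates $C$ as a maximum over $\mathcal{D}$ in the simulation setup.
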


\begin{proof}
On one hand, we verify that any point $\mathbf{x} \in \mathcal{D} \setminus \mathcal{D}_i$ satisfies constraint~(\ref{eq_theo_block}). The definition of $\mathcal{D} \setminus \mathcal{D}_i$ is given by $\mathcal{D} \setminus \mathcal{D}_i=\{ \mathbf{x} \in \mathcal{D} | \mathbf{a}_{ij}^T \mathbf{x} -b_{ij}> 0, \exists j \in \mathcal{J}_i\}$. With involving auxiliary binary variable $l_{ij}\in \{0,1\},  \forall j\in \mathcal{J}_i$, we know that $\exists j \in \mathcal{J}_i$ such that $\mathbf{a}_{ij}^T \mathbf{x} -b_{ij}+Cl_{ij}> 0$ by setting $l_{ij}=0$. For other $j\in \mathcal{J}_i$,  $\mathbf{a}_{ij}^T \mathbf{x} -b_{ij}+Cl_{ij}> 0$ holds by setting $l_{ij}=1$, for a sufficiently large $C>0$. Thus, $\forall j\in \mathcal{J}_i$, $\mathbf{a}_{ij}^T \mathbf{x} -b_{ij}+Cl_{ij}> 0$ holds, and there exists at least one $l_{ij}=0$, which makes $\sum_{j \in \mathcal{J}_i} l_{ij}\leq |\mathcal{J}_i|-1$ hold.

On the other hand, $l_{ij}\in \{0,1\},  \forall j\in \mathcal{J}_i$ and $\sum_{j \in \mathcal{J}_i} l_{ij}\leq |\mathcal{J}_i|-1$ constrains that $\exists j \in \mathcal{J}_i$, $l_{ij}=0$. Together with $\mathbf{x} \in \mathcal{D}$ and $\mathbf{a}_{ij}^T \mathbf{x} -b_{ij} +Cl_{ij} > 0,  \forall j\in \mathcal{J}_i$, the position $\mathbf{x}$ is constrained in a region where $\mathbf{x} \in \mathcal{D}$ and $\exists j \in \mathcal{J}_i$, $\mathbf{a}_{ij}^T \mathbf{x} -b_{ij} +Cl_{ij}= \mathbf{a}_{ij}^T \mathbf{x} -b_{ij}+0 > 0$, which is the definition of $\mathcal{D} \setminus \mathcal{D}_i$.

This completes the proof.
\end{proof}
According to Theorem~\ref{Theo_blockage}, constraint~(\ref{c:block}) is equivalent to the following mixed integer linear constraints:
\begin{align}
& \mathbf{a}_{ij}^T \mathbf{x} -b_{ij} +Cl_{ij} > 0, \forall j\in \mathcal{J}_i, \forall i\in \mathcal{I}, \label{c:block1} \\
& l_{ij}\in \{0,1\},   \forall j\in \mathcal{J}_i, \forall i\in \mathcal{I}, \label{c:block2} \\
& \sum_{j \in \mathcal{J}_i} l_{ij} \leq |\mathcal{J}_i|-1, \forall i\in \mathcal{I},  \label{c:block3} \\
& \mathbf{x} \in \mathcal{D},  \label{c:block4}
\end{align}
where $C$ is a sufficiently large constant.

\subsection{Lagrangian Relaxation}
By replacing constraint~(\ref{c:block}) with constraints (\ref{c:block1})--(\ref{c:block4}), problem~(\ref{eq_problem}) is equivalent to 
\begin{align}
	\max\limits_{\mathbf{x}, P_\mathrm{B}, \{P_k\}, \{l_{ij}\}}~~ & R \label{eq_problem_equivalent} \\
	\mbox{s.t.}~~ & \text{(\ref{c:rate1}), (\ref{c:rate2}), (\ref{c:p1}), (\ref{c:p2}), (\ref{c:p3}), }   \notag  \\
	&\text{(\ref{c:block1}), (\ref{c:block2}), (\ref{c:block3}), (\ref{c:block4})}.   \notag 
\end{align}

Due to the non-convexity of constraints (\ref{c:rate1}), (\ref{c:rate2}) and binary constraint (\ref{c:block2}), problem~(\ref{eq_problem_equivalent}) is a non-convex and mixed-integer problem, which is difficult to obtain a globally optimal solution. Therefore, we propose to utilize Lagrangian relaxation~\cite{Fisher1981The}. First, binary constraint (\ref{c:block2}) can be equivalently replaced by 
\begin{align}
	&0 \leq l_{ij} \leq 1,  \forall j\in \mathcal{J}_i, \forall i\in \mathcal{I}, \label{c:lij1} \\
	&\sum_{j \in \mathcal{J}_i} l_{ij}(1-l_{ij})\leq 0,  \forall i\in \mathcal{I}. \label{c:lij2}
\end{align}

In this way, $l_{ij}$ becomes a continuous optimization variable between 0 and 1. \textcolor{black}{By replacing constraint~(\ref{c:block2}) with constraints~(\ref{c:lij1}) and (\ref{c:lij2}), and then dualizing and penalizing constraint (\ref{c:lij2}) into the objective function with Lagrangian multipliers $\lambda_i$ for $i\in \mathcal{I}$,} 
%
we obtain a Lagrangian problem 
\begin{align}
	\max\limits_{\mathbf{x}, P_\mathrm{B}, \{P_k\}, \{l_{ij}\}}~~ & R-\sum_{i \in \mathcal{I}} \lambda_i \sum_{j \in \mathcal{J}_i} l_{ij}(1-l_{ij})   \label{eq_problem_relaxed} \\
	\mbox{s.t.}~~ &\lambda_i \geq 0, \forall i \in \mathcal{I} \tag{\ref{eq_problem_relaxed}a}\label{c:lambda},  \\
	&\text{(\ref{c:rate1})}, \text{(\ref{c:rate2}), (\ref{c:p1}), (\ref{c:p2}), (\ref{c:p3}),}   \notag\\
	&\text{(\ref{c:block1}), (\ref{c:block3}), (\ref{c:block4}), (\ref{c:lij1})}.   \notag
\end{align}
\textcolor{black}{
Comparing Lagrangian problem~(\ref{eq_problem_relaxed}) with original problem~(\ref{eq_problem_equivalent}), we have the following Lemma.
\begin{lemma} \label{Lemma_LR_upperbound}
For any given Lagrangian multipliers, the optimal value of Lagrangian problem~(\ref{eq_problem_relaxed}) yields an upper bound on the optimal value of original problem~(\ref{eq_problem_equivalent}). 
\end{lemma}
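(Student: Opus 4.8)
The plan is to use a standard weak-duality argument: exhibit an optimizer of the original problem as a feasible point of the Lagrangian problem at which the penalized objective coincides with the original objective, so that the maximized Lagrangian value can only be at least as large.

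First I would fix arbitrary multipliers $\lambda_i \ge 0$ and let $(\mathbf{x}^\star, P_\mathrm{B}^\star, \{P_k^\star\}, \{l_{ij}^\star\})$ denote an optimal solution of problem~(\ref{eq_problem_equivalent}), with optimal value $R^\star$. Since~(\ref{eq_problem_equivalent}) carries the binary constraint~(\ref{c:block2}), each $l_{ij}^\star \in \{0,1\}$; in particular $0 \le l_{ij}^\star \le 1$ and $l_{ij}^\star(1-l_{ij}^\star)=0$ for every $i,j$. I would then check that this point is feasible for the Lagrangian problem~(\ref{eq_problem_relaxed}). All constraints shared by the two problems, namely (\ref{c:rate1}), (\ref{c:rate2}), (\ref{c:p1}), (\ref{c:p2}), (\ref{c:p3}), (\ref{c:block1}), (\ref{c:block3}), and (\ref{c:block4}), hold automatically. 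The only constraint in~(\ref{eq_problem_relaxed}) that replaces~(\ref{c:block2}) is the box constraint~(\ref{c:lij1}), which is satisfied because binary values lie in $[0,1]$; and the multiplier sign condition~(\ref{c:lambda}) holds by our choice of $\lambda_i$. Hence $(\mathbf{x}^\star, P_\mathrm{B}^\star, \{P_k^\star\}, \{l_{ij}^\star\})$ is feasible for~(\ref{eq_problem_relaxed}).

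Next I would evaluate the Lagrangian objective at this feasible point. Because $l_{ij}^\star(1-l_{ij}^\star)=0$ for all $i,j$, the penalty term vanishes and the objective equals $R^\star - \sum_{i\in\mathcal{I}}\lambda_i\sum_{j\in\mathcal{J}_i} l_{ij}^\star(1-l_{ij}^\star)=R^\star$. Since~(\ref{eq_problem_relaxed}) is a maximization, its optimal value is at least the objective attained at any feasible point, and therefore at least $R^\star$. As $R^\star$ is the optimal value of~(\ref{eq_problem_equivalent}), this establishes the claimed upper bound.

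The argument is in essence weak duality, so I do not expect a serious obstacle; the only points requiring care are (i) confirming that relaxing the binary constraint to~(\ref{c:lij1}) together with dualizing~(\ref{c:lij2}) drops no other constraint of the original problem, so that feasibility genuinely transfers, and (ii) noting that the sign condition $\lambda_i \ge 0$ is what guarantees the penalty term cannot lower the relaxed objective below $R^\star$ for a general, possibly non-integral, feasible point, even though at the specific original optimum this term is exactly zero.
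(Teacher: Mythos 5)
Your proof is correct and uses essentially the same argument as the paper: both rest on the facts that every binary-feasible point of problem~(\ref{eq_problem_equivalent}) remains feasible for the Lagrangian problem~(\ref{eq_problem_relaxed}) and that the penalty term vanishes there, so the relaxed maximum can only be larger. The paper phrases this at the level of feasible sets (via an intermediate problem whose constraint~(\ref{c:lij2}) is then dropped) while you evaluate directly at an original optimizer, but the mathematical content is identical.
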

\begin{proof}See Appendix~\ref{app_LR_upperbound}.
\end{proof}
}

To decrease the duality gap between Lagrangian problem~(\ref{eq_problem_relaxed}) and original problem~(\ref{eq_problem_equivalent}) and obtain a good feasible solution to original problem~(\ref{eq_problem_equivalent}), we develop a two-loop optimization framework to adaptively determine the values of Lagrangian multipliers as well as the UAV positioning and power allocation.


\section{Solution of the Lagrangian problem}\label{sec_solution_relaxed}
In this section, we introduce the method to solve Lagrangian problem~(\ref{eq_problem_relaxed}) as well as adjusting the Lagrangian multipliers to deduce a good feasible solution to the original problem~(\ref{eq_problem_equivalent}). A two-loop optimization framework is developed, where the inner-loop optimizes the Lagrangian problem~(\ref{eq_problem_relaxed}) for given Lagrangian multipliers, and the outer-loop updates the multipliers to decrease the duality gap. The flowchart of the solution is shown in Fig.~\ref{fig:flowchart}. 
Specifically, due to the non-concave objective function and non-convex constraints in (\ref{c:rate1}) and (\ref{c:rate2}), Lagrangian problem~(\ref{eq_problem_relaxed}) is not a convex problem and is challenging to obtain a globally optimal solution. Therefore, we develop an iterative algorithm by applying the BCD technique~\cite{xing2021matrix,xing2021matrix2}. For given UAV position $\mathbf{x}$ and the auxiliary variables $\{l_{ij}\}$, we solve the power allocation sub-problem and obtain a closed-form solution. For given transmit powers $\{P_k\}$ and $P_{\mathrm{B}}$, we solve the UAV positioning sub-problem approximately by utilizing the SCA technique~\cite{Dinh2010local}. These two sub-problems are alternately optimized until a suboptimal solution is obtained, which are discussed in Sections~\ref{subsec_PA} and \ref{subsec_Pos}. The method to update Lagrangian multipliers is discussed in Section~\ref{subsec_multiplier}.

\begin{figure}[t]
	\begin{center}
		\includegraphics[width=0.6\linewidth]{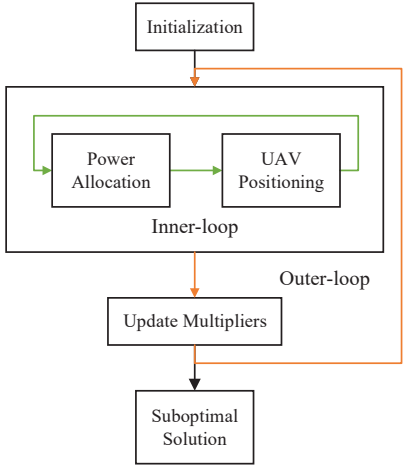}
		\caption{The flowchart of two-loop solution for Lagrangian problem~(\ref{eq_problem_relaxed}).}
		\label{fig:flowchart}
	\end{center}
\end{figure}

\subsection{Power Allocation}\label{subsec_PA}
For the $T$-th outer-loop, given Lagrangian multipliers $\{\lambda_i^{(T)}\}$, we develop an iterative algorithm to solve Lagrangian problem~(\ref{eq_problem_relaxed}). For the $t$-th iteration of the inner-loop, given UAV's position $\mathbf{x}^{(t)}$ and auxiliary variables $\{l_{ij}^{(t)}\}$, problem~(\ref{eq_problem_relaxed}) is transformed to the following power allocation problem 
\begin{align}
	\max\limits_{P_\mathrm{B}, \{P_k\}}~~ & R - \sum_{i \in \mathcal{I}} \lambda_i^{(T)} \sum_{j \in \mathcal{J}_i} l_{ij}^{(t)}(1-l_{ij}^{(t)})  \label{eq_problem_PowerControl} \\
	\mbox{s.t.}~~ &R \leq W_\mathrm{U} \log_2\left(1+\eta_k^{(t)} P_k \right), \forall k\in \mathcal{K}, \tag{\ref{eq_problem_PowerControl}a} \\
	&K R \leq W_{\mathrm{B}}^{} \log_2\left(1+\eta_\mathrm{B}^{(t)} P_{\mathrm{B}}^{} \right),  \tag{\ref{eq_problem_PowerControl}b} \\
	&\text{(\ref{c:p1}), (\ref{c:p2}), (\ref{c:p3})}, \notag
\end{align}
where 
$\eta_k^{(t)}=\beta_1/ (N_0 W_\mathrm{U} \|\mathbf{x}^{(t)}-\mathbf{x}_k\|^{\alpha_1})$ and $\eta_{\mathrm{B}}^{(t)}=\beta_1/(N_0 W_{\mathrm{B}} \|\mathbf{x}^{(t)}-\mathbf{x}_{\mathrm{B}}\|^{\alpha_1})$ 
are intermediate parameters for notational simplicity. The second term in the objective function is a constant which does not impact the optimality. 

Problem~(\ref{eq_problem_PowerControl}) is a convex problem with respect to $P_\mathrm{B}$ and $ \{P_k\}$~\cite{zhu2019millim, diamond2016cvxpy}. To maximize the minimum communication capacity with minimal transmit powers, we provide the following theorem.

\begin{theorem} \label{Theo_power}
	For given position $\mathbf{x}^{(t)}$ and auxiliary variables $\{l_{ij}^{(t)}\}$, the optimal power allocation for the BS and UAV relay is given as follows:
\begin{align}
	&~\rm{If}~\left(1+\eta_\mathrm{B}^{(t)} P_{\mathrm{B}}^{\mathrm{tot}}\right)^{\frac{W_{\mathrm{B}}^{}}{K}}< \left(1+\eta_\mathrm{V}^{(t)} P_{\mathrm{V}}^{\mathrm{tot}} \right)^{W_{\mathrm{U}}^{}}, \notag  \\ 
	&\left\{
	\begin{aligned}
		&P_{\mathrm{B}}^{(t+1)}=P_{\mathrm{B}}^{\mathrm{tot}},\\
		&P_{k}^{(t+1)}=\left( \left(1+\eta_\mathrm{B}^{(t)} P_{\mathrm{B}}^{\mathrm{tot}}\right)^{\frac{W_{\mathrm{B}}^{}}{KW_{\mathrm{U}}}}-1 \right)/\eta_k^{(t)};
	\end{aligned}
	\right. \notag
\end{align}
\begin{align}
	&~\rm{otherwise},  \notag \\
	&\left\{
	\begin{aligned}
		&P_{\mathrm{B}}^{(t+1)} = \left( \left(1+\eta_\mathrm{V}^{(t)} P_{\mathrm{V}}^{\mathrm{tot}} \right)^{\frac{KW_{\mathrm{U}}^{}}{W_{\mathrm{B}}^{}}} - 1 \right) / \eta_\mathrm{B}^{(t)},\\
		&P_k^{(t+1)}=\eta_\mathrm{V}^{(t)}P_{\mathrm{V}}^{\mathrm{tot}}/\eta_k^{(t)} , \label{eq_opt_power}
	\end{aligned}
	\right.
\end{align}
with $\eta_\mathrm{V}^{(t)}=1/\sum \limits_{k \in \mathcal{K}}\frac{1}{\eta_k^{(t)}}$.
\end{theorem}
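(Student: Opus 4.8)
The plan is to exploit the fact that the decision variables $P_\mathrm{B}$ and $\{P_k\}$ enter problem~(\ref{eq_problem_PowerControl}) only through the two capacity constraints~(\ref{eq_problem_PowerControl}a) and~(\ref{eq_problem_PowerControl}b), while the objective (up to the constant penalty term) is simply to maximize $R$. Since $\log_2(1+\eta P)$ is strictly increasing in $P$, I would first invert the two capacity constraints: for a prescribed target rate $R$, constraint~(\ref{eq_problem_PowerControl}a) is feasible if and only if $P_k \geq (2^{R/W_\mathrm{U}}-1)/\eta_k^{(t)}$ for every $k$, and constraint~(\ref{eq_problem_PowerControl}b) is feasible if and only if $P_\mathrm{B} \geq (2^{KR/W_\mathrm{B}}-1)/\eta_\mathrm{B}^{(t)}$. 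Thus, achieving a given $R$ costs the UAV a minimum total power $\sum_{k\in\mathcal{K}}(2^{R/W_\mathrm{U}}-1)/\eta_k^{(t)} = (2^{R/W_\mathrm{U}}-1)/\eta_\mathrm{V}^{(t)}$ (using the definition $\eta_\mathrm{V}^{(t)}=1/\sum_{k}1/\eta_k^{(t)}$), and costs the BS a minimum power $(2^{KR/W_\mathrm{B}}-1)/\eta_\mathrm{B}^{(t)}$.

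Next, I would translate the power-budget constraints~(\ref{c:p2}) and~(\ref{c:p3}) into upper bounds on $R$. Substituting the minimal-power expressions into $\sum_{k}P_k \le P_\mathrm{V}^{\mathrm{tot}}$ and $P_\mathrm{B}\le P_\mathrm{B}^{\mathrm{tot}}$ and solving for $R$ yields $R\le W_\mathrm{U}\log_2(1+\eta_\mathrm{V}^{(t)}P_\mathrm{V}^{\mathrm{tot}})$ and $R\le (W_\mathrm{B}/K)\log_2(1+\eta_\mathrm{B}^{(t)}P_\mathrm{B}^{\mathrm{tot}})$, respectively. The maximal achievable rate is therefore the minimum of these two quantities, and exponentiating both sides shows that the backhaul bound $(W_\mathrm{B}/K)\log_2(1+\eta_\mathrm{B}^{(t)}P_\mathrm{B}^{\mathrm{tot}})$ is the smaller one precisely when $(1+\eta_\mathrm{B}^{(t)}P_\mathrm{B}^{\mathrm{tot}})^{W_\mathrm{B}/K} < (1+\eta_\mathrm{V}^{(t)}P_\mathrm{V}^{\mathrm{tot}})^{W_\mathrm{U}}$, which is exactly the case split in the statement.

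Finally, in each case I would recover the optimal powers by forcing the binding constraint to hold with equality at the optimal rate $R^\star$. When the backhaul is the bottleneck, set $P_\mathrm{B}^{(t+1)}=P_\mathrm{B}^{\mathrm{tot}}$ and read off $P_k^{(t+1)}$ from constraint~(\ref{eq_problem_PowerControl}a) at equality, which after substituting $R^\star=(W_\mathrm{B}/K)\log_2(1+\eta_\mathrm{B}^{(t)}P_\mathrm{B}^{\mathrm{tot}})$ gives the stated formula; symmetrically, when the access links are the bottleneck, choose $\{P_k^{(t+1)}\}$ so that $\sum_{k}P_k=P_\mathrm{V}^{\mathrm{tot}}$ at $R^\star=W_\mathrm{U}\log_2(1+\eta_\mathrm{V}^{(t)}P_\mathrm{V}^{\mathrm{tot}})$ and recover $P_\mathrm{B}^{(t+1)}$ from constraint~(\ref{eq_problem_PowerControl}b) at equality. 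I would close by verifying that the remaining, non-binding budget constraint holds with slack in each case, which follows directly from the case condition.

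The main obstacle I anticipate is justifying rigorously that at the optimum the relevant constraints are active, i.e., that the minimal-power allocation is without loss of optimality. The clean way is a monotonicity argument: because each $P_k$ appears only in its own constraint~(\ref{eq_problem_PowerControl}a) and in the shared budget~(\ref{c:p2}), any feasible point achieving rate $R$ can be replaced by the minimal-power point $P_k=(2^{R/W_\mathrm{U}}-1)/\eta_k^{(t)}$ without decreasing $R$ while freeing budget, so the threshold characterization of the achievable $R$ is exact; the same reasoning applies to $P_\mathrm{B}$ via constraints~(\ref{eq_problem_PowerControl}b) and~(\ref{c:p3}). Alternatively, one could invoke the KKT conditions of the convex problem~(\ref{eq_problem_PowerControl}), but the monotonicity route avoids dual-variable bookkeeping and makes the two-case structure transparent; the remaining effort is then the routine log-to-power algebra converting $R^\star$ into the closed-form expressions.
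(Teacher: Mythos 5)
Your proposal is correct, and it reaches the theorem by a genuinely different route than the paper. You eliminate the power variables outright: invert the monotone rate constraints to get the minimal powers needed for a target $R$, sum them to translate the two budget constraints into two upper bounds on $R$, take $R^\star$ as the smaller bound, and read the powers off the binding constraints at equality. In your scheme the per-user structure $\eta_k^{(t)}P_k = 2^{R/W_\mathrm{U}}-1$ (constant over $k$) falls out automatically from taking minimal powers, and your monotonicity argument correctly certifies that restricting to minimal-power points loses no optimality. The paper instead establishes structural properties of the optimum by perturbation arguments: first a scaling contradiction showing at least one of the two budgets must be tight, then a sorting-and-exchange argument showing the optimal UAV allocation must equalize $\eta_k^{(t)}P_k$ across users; only then does it compare the backhaul-limited rate $\widetilde{R}$ with the access-limited rate $\widehat{R}$ to obtain the same case split, shrinking the non-bottleneck power ``to avoid waste.'' The two approaches buy different things: yours is shorter, avoids the permutation bookkeeping entirely, and makes the case threshold transparent as a comparison of two interval endpoints; the paper's makes the optimality structure (budget tightness and SNR equalization) explicit as standalone facts and thereby motivates why the stated allocation is the minimal-power representative among the optimal solutions. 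Both arrive at exactly the formulas in the theorem, including the slackness of the non-binding budget in each case.
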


\begin{proof}
See Appendix~\ref{app_power}.
\end{proof}

Hereto, we have obtained the optimal solution for the power allocation sub-problem. 
%

\subsection{UAV Positioning}\label{subsec_Pos}

Given the power allocation $\{P_k^{(t+1)}\}$ and $P_{\mathrm{B}}^{(t+1)}$, problem~(\ref{eq_problem_relaxed}) is transformed to the following positioning problem
\begin{align}
	\max\limits_{\mathbf{x}, \{l_{ij}\}}~~ & R-\sum_{i \in \mathcal{I}} \lambda_i^{(T)} \sum_{j \in \mathcal{J}_i} l_{ij}(1-l_{ij})  \label{eq_problem_positioning} \\
	\mbox{s.t.}~~ & R \leq W_\mathrm{U} \log_2\left(1+\zeta_k^{(t)}/ \|\mathbf{x}-\mathbf{x}_k\|^{\alpha_1}\right), \forall k\in \mathcal{K},  \tag{\ref{eq_problem_positioning}a} \label{c:pos_rate1} \\
	&  K R \leq W_{\mathrm{B}}^{} \log_2\left(1+\zeta_{\mathrm{B}}^{(t)}/ \|\mathbf{x}-\mathbf{x}_{\mathrm{B}}^{}\|^{\alpha_1}\right),   \tag{\ref{eq_problem_positioning}b} \label{c:pos_rate2} \\
	& \text{(\ref{c:block1}), (\ref{c:block3}), (\ref{c:block4}), (\ref{c:lij1}),}   \notag 
\end{align}
where $\zeta_k^{(t)} = \frac{P_k^{(t+1)}\beta_1}{N_0 W_\mathrm{U}}$ and $\zeta_{\mathrm{B}}^{(t)} = \frac{P_{\mathrm{B}}^{(t+1)}\beta_1}{N_0 W_{\mathrm{B}}}$ are intermediate parameters for notational simplicity. Problem~(\ref{eq_problem_positioning}) is not a concave/convex problem because of the non-concave objective function and the non-convex constraints in (\ref{c:pos_rate1}) and (\ref{c:pos_rate2}). In the following, we employ the SCA technique to obtain a sub-optimal solution. 

Considering the objective function, for a given local point $l_{ij}^{(t)}$ in the $t$-th iteration, we have the following lower bound on $(l_{ij})^2$:	
\begin{equation}\label{eq_l_lb}
(l_{ij})^2 \geq 2 l_{ij}^{(t)} l_{ij} - (l_{ij}^{(t)})^2.
\end{equation}

Considering constraints~(\ref{c:pos_rate1}) and  (\ref{c:pos_rate2}), note that function $r(z)=\log_2 \left(1+1/z^{\alpha_1} \right)$ is convex with respect to $z$ for $z>0$, thus the right-hand-side (RHS) of constraint~(\ref{c:pos_rate1}) and the RHS of constraint~(\ref{c:pos_rate2}) are convex with respect to $\|\mathbf{x}-\mathbf{x}_k\|$ and $\|\mathbf{x}-\mathbf{x}_{\mathrm{B}}^{}\|$, respectively. For a given local point $\mathbf{x}^{(t)}$ in the $t$-th iteration, the RHS of constraint~(\ref{c:pos_rate1}) is lower-bounded by its first-order Taylor expansion at $\|\mathbf{x}^{(t)}-\mathbf{x}_k\|$ as
\begin{equation}\label{eq_rate_lb}
	\begin{aligned}
		&W_\mathrm{U} \log_2\left(1+\zeta_k^{(t)}/ \|\mathbf{x}-\mathbf{x}_k\|^{\alpha_1}\right) \geq\\ &A_k^{(t)}-B_k^{(t)}(\|\mathbf{x}-\mathbf{x}_k\|-\|\mathbf{x}^{(t)}-\mathbf{x}_k\|)\triangleq R_k^\mathrm{lb},
	\end{aligned}
\end{equation}
with 
\begin{align}
	A_k^{(t)}&=W_\mathrm{U} \log_2\left(1+\zeta_k^{(t)}/ \|\mathbf{x}^{(t)}-\mathbf{x}_k\|^{\alpha_1}\right),\\
	B_k^{(t)}&=\frac{ W_\mathrm{U}\zeta_k^{(t)} \alpha_1  }{\|\mathbf{x}^{(t)}-\mathbf{x}_k\|(\|\mathbf{x}^{(t)}-\mathbf{x}_k\|^{\alpha_1}+\zeta_k^{(t)})\ln 2}.
\end{align}
The RHS of constraint~(\ref{c:pos_rate2}) is lower-bounded by its first-order Taylor expansion at $\|\mathbf{x}^{(t)}-\mathbf{x}_\mathrm{B}\|$ as
\begin{equation}\label{eq_rate_BS_lb}
	\begin{aligned}
		&W_{\mathrm{B}}^{} \log_2\left(1+\zeta_{\mathrm{B}}^{(t)}/ \|\mathbf{x}-\mathbf{x}_{\mathrm{B}}^{}\|^{\alpha_1}\right)\geq \\
		&A_{\mathrm{B}}^{(t)}-B_{\mathrm{B}}^{(t)}(\|\mathbf{x}-\mathbf{x}_{\mathrm{B}}\|-\|\mathbf{x}^{(t)}-\mathbf{x}_{\mathrm{B}}\|)\triangleq R_{\mathrm{B}}^\mathrm{lb},
	\end{aligned}
\end{equation}
with 
\begin{align}
	A_{\mathrm{B}}^{(t)}&=W_\mathrm{B} \log_2\left(1+\zeta_{\mathrm{B}}^{(t)}/ \|\mathbf{x}^{(t)}-\mathbf{x}_{\mathrm{B}}\|^{\alpha_1}\right),\\
	B_{\mathrm{B}}^{(t)}&=\frac{ W_\mathrm{B}\zeta_{\mathrm{B}}^{(t)} \alpha_1  }{\|\mathbf{x}^{(t)}-\mathbf{x}_{\mathrm{B}}\|(\|\mathbf{x}^{(t)}-\mathbf{x}_{\mathrm{B}}\|^{\alpha_1}+\zeta_{\mathrm{B}}^{(t)})\ln 2}.
\end{align}

As a result, given local points $\mathbf{x}^{(t)}$ and $\{l_{ij}^{(t)}\}$ in the $t$-th iteration, problem~(\ref{eq_problem_positioning}) is relaxed as
\begin{align}
\max\limits_{\mathbf{x}, \{l_{ij}\}}~~ & R-\sum_{i \in \mathcal{I}} \lambda_i^{(T)} \sum_{j \in \mathcal{J}_i} \left(l_{ij}- 2 l_{ij}^{(t)} l_{ij} + (l_{ij}^{(t)})^2 \right) \label{eq_problem_positioning_approx} \\
\mbox{s.t.}~~ &  R \leq R_k^\mathrm{lb}, \forall k\in \mathcal{K},  \tag{\ref{eq_problem_positioning_approx}a}\label{c:pos_rate11} \\
& K R \leq R_\mathrm{B}^\mathrm{lb},  \tag{\ref{eq_problem_positioning_approx}b}\label{c:pos_rate22}  \\
& \|\mathbf{x}-\mathbf{x}^{(t)}\| \leq \rho^{(t)},  \tag{\ref{eq_problem_positioning_approx}c}\label{trustregion}\\
&\text{(\ref{c:block1}), (\ref{c:block3}), (\ref{c:block4}), (\ref{c:lij1})}.   \notag 
\end{align}

It is easy to verify that problem~(\ref{eq_problem_positioning_approx}) is a convex problem and can be readily solved by standard convex program solvers, such as CVXPY~\cite{diamond2016cvxpy, xiao2018jointp}. The objective function of problem~(\ref{eq_problem_positioning_approx}) is a lower bound on the objective function of problem~(\ref{eq_problem_positioning}). Constraints~(\ref{c:pos_rate11}) and (\ref{c:pos_rate22}) have tighter forms compared to constraints~(\ref{c:pos_rate1}) and (\ref{c:pos_rate2}). Hance,
any feasible solution of problem~(\ref{eq_problem_positioning_approx}) is also feasible for problem~(\ref{eq_problem_positioning}). Constraint (\ref{trustregion}) is introduced to limit the update of $\mathbf{x}^{(t+1)}$ in a small neighborhood around the local point $\mathbf{x}^{(t)}$. $\rho^{(t)}$ is the radius of the spherical region and is gradually reduced during the iteration to ensure convergence. A feasible way to update $\rho^{(t)}$ is $\rho^{(t+1)}=\kappa_1 \rho^{(t)}$, where $\kappa_1<1$ is the stepsize. 

\subsection{Updating Lagrangian Multipliers}\label{subsec_multiplier}
To decrease the duality gap between Lagrangian problem~(\ref{eq_problem_relaxed}) and original problem~(\ref{eq_problem_equivalent}), the Lagrangian multipliers need to be adjusted. This is actually a dual problem with respect to Lagrangian multipliers, which can be optimized by the gradient method~\cite{bertsekas1997nonlinear}. For the $T$-th outer-loop, we have gotten the converged solution to  Lagrangian problem~(\ref{eq_problem_relaxed}) for given $\{ \lambda_i^{(T)} \}$ by inner-loop iterations, denoted by $\bar{\mathbf{x}}^{(T)}$, $\{\bar{l}_{ij}^{(T)}\}$, $\bar{P}_{\rm{B}}^{(T)}$ and $\{\bar{P}_{{i}}^{(T)}\}$. Then the multiplier $\lambda_i$ is updated by using the following formula~\cite{Fisher1981The}:
\begin{equation}\label{eq_update_lambda}
	\begin{aligned}
		\lambda_i^{(T+1)} = \max \left( 0, \lambda_i^{(T)} + \gamma^{(T)} \sum_{j \in \mathcal{J}_i} \bar{l}_{ij}^{(T)} (1-\bar{l}_{ij}^{(T)}) \right), 
	\end{aligned}
\end{equation} 
where $\gamma^{(T)}$ is the stepsize
formulated as 

{\color{black}
\begin{equation}\label{eq_update_stepsize}
	\begin{aligned}
		\gamma^{(T)} = \frac{\mu^{(T)} \times (q_{\rm{U}}^{(T)}-q_{\rm{L}}^{(T)})}{\sum_{i \in \mathcal{I}} \left( \sum_{j \in \mathcal{J}_i} \bar{l}_{ij}^{(T)} (1-\bar{l}_{ij}^{(T)}) \right)^2 },
	\end{aligned}
\end{equation}}
where $q_{\rm{U}}^{(T)}$ is the objective value of  Lagrangian problem~(\ref{eq_problem_relaxed}) at the $T$-th outer-loop, $q_{\rm{L}}^{(T)}$ is the objective value of  original problem~(\ref{eq_problem_equivalent}). The denominator is the sum square of the gradients. {\color{black}$\mu^{(T)}$ is an adaption parameter which is determined by setting $\mu^{(0)}=2$ and halving $\mu^{(T)}$ whenever $q_{\rm{U}}^{(T)}$ fails to decrease in the previous iteration.} If the solution to Lagrangian problem~(\ref{eq_problem_relaxed}) is not feasible for original problem~(\ref{eq_problem_equivalent}), which means the BS-UAV link or UAV-UE link is blocked, $q_{\rm{L}}^{(T)}$ is estimated to be zero. Otherwise, $q_{\rm{L}}^{(T)}$ is set to be the corresponding communication capacity by substituting the solution into original problem~(\ref{eq_problem_equivalent}).

{\color{black}
	It can be seen that (\ref{eq_update_lambda}) results an increase update direction for $\lambda_i$. In general, large $\lambda_i$ leads to small violation of constraint (\ref{c:lij2})~\cite{bertsekas1997nonlinear}. However, it is not effective to initially set $\lambda_i$ to be a vary large value, since the objective will be dominated by the penalty term $-\sum_{i \in \mathcal{I}} \lambda_i \sum_{j \in \mathcal{J}_i} l_{ij}(1-l_{ij})$ and the minimum communication capacity $R$ will be diminished. In contrast, initializing $\lambda_i$ with a small value will provide enough degree of freedom for UAV positioning  to obtain a good starting point. Then, by gradually increasing the value of $\lambda_i$ via (\ref{eq_update_lambda}), the upper bound given by Lagrangian problem~(\ref{eq_problem_relaxed}) can be gradually strengthened. Ultimately, constraint~(\ref{c:lij2}) is satisfied, which means that the UAV is positioned in a region without any building blockage. In this way, the solution to Lagrangian problem~(\ref{eq_problem_relaxed}) is also feasible for original problem~(\ref{eq_problem_equivalent}), and the gap between the two problem is decreased to zero.
	
}


\section{Overall Solution}\label{sec_overall}
In this section, we first provide the overall solution of the joint positioning and power allocation problem for UAV relay systems aided by geographic information. Then, the convergence and computational complexity are analyzed.
\subsection{Overall Solution}
The overall solution of the algorithm is summarized in Algorithm~\ref{alg:overall_solution}. In Line 1, we invoke Algorithm~\ref{alg:blckage_formula} to calculate the blocked regions for the BS and $K$ UEs caused by $M$ buildings based on geographic information. Then, in Lines 3-14, we employ the Lagrangian relaxation framework and update the Lagrangian multipliers $\{\lambda_i\}$ in an iterative way. Lines 5-10 solve  Lagrangian problem~(\ref{eq_problem_relaxed}) with given $\{\lambda_i^{(T)}\}$, where the power allocation and UAV positioning are optimized in an alternate manner. The inner-loop terminates, when the increase of the objective value of problem~(\ref{eq_problem_positioning_approx}) from one iteration to the next falls bellow a threshold $\epsilon_\mathrm{t}$ or the number of iterations exceeds the maximum value $L_\mathrm{t,max}$. The outer-loop terminates, when the difference between the objective value of problem~(\ref{eq_problem_relaxed}) $q_{\rm{U}}^{(T)}$ and the objective value of problem~(\ref{eq_problem_equivalent}) $q_{\rm{L}}^{(T)}$ falls bellow a threshold $\epsilon_\mathrm{T}$ or the number of iterations exceeds the maximum value $L_\mathrm{T,max}$. 

\begin{algorithm}[t] \small
	\label{alg:overall_solution}
	\caption{Joint positioning and power allocation problem for UAV relay systems.}
	\begin{algorithmic}[1]
		\REQUIRE ~The coordinates of the vertices of the buildings, $\{\mathbf{x}_k\}$, $\mathbf{x}_\mathrm{B}$, $P_\mathrm{B}^\mathrm{tot}$, $P_\mathrm{V}^\mathrm{tot}$, $x_\mathrm{D}$, $y_\mathrm{D}$, $h_\mathrm{min}$, $N_0$, $\alpha_1$, $\beta_1$, $\kappa_1$, $\epsilon_\mathrm{T}$, $\epsilon_\mathrm{t}$, $L_\mathrm{T,max}$, $L_\mathrm{t,max}$.
		\ENSURE ~$\mathbf{x}^\star$, $P_\mathrm{B}^\star$, $\{P_k^\star\}$. \\
		\STATE Calculate the blocked regions according to Algorithm~\ref{alg:blckage_formula}.
		\STATE Set the iteration index of outer-loops as $T=0$, and initialize $\{\lambda_i^{(0)}\}$, $\bar{\mathbf{x}}^{(0)}$, $\{\bar{l}_{ij}^{(0)}\}$.		
		\REPEAT
		\STATE Set the iteration index of inner-loops as $t=0$, and initialize $\rho^{(0)}$, $\mathbf{x}^{(0)}\leftarrow \bar{\mathbf{x}}^{(T)}$, $\{l_{ij}^{(0)}\}\leftarrow \{\bar{l}_{ij}^{(T)}\}$.
		\REPEAT
		\STATE Obtain $P_\mathrm{B}^{(t+1)}$ and $\{P_k^{(t+1)}\}$ according to Theorem~\ref{Theo_power}.
		\STATE Solve Problem~(\ref{eq_problem_positioning_approx}) and obtain $\mathbf{x}^{(t+1)}$, $\{l_{ij}^{(t+1)}\}$.
		\STATE Update $\rho^{(t+1)}\leftarrow \kappa_1 \rho^{(t)}$.
		\STATE Update $t\leftarrow t+1$.
		\UNTIL{The increase of objective value falls bellow $\epsilon_\mathrm{t}$ or $t>L_\mathrm{t,max}$.}	
		\STATE{Update $\bar{\mathbf{x}}^{(T)} \leftarrow \mathbf{x}^{(t)}$, $\{\bar{l}_{ij}^{(T)}\}\leftarrow \{l_{ij}^{(t)}\}$ }
		\STATE Update Lagrangian multiplier $\{\lambda_i^{(T+1)}\}$ based on (\ref{eq_update_lambda}).
		\STATE Update $T\leftarrow T+1$.
		\UNTIL{The difference between $q_{\rm{U}}^{(T)}$ and $q_{\rm{L}}^{(T)}$ falls below $\epsilon_\mathrm{T}$ or $T>L_\mathrm{T,max}$.}		
		\RETURN $R^\star$, $\mathbf{x}^\star$, $P_\mathrm{B}^\star$, $\{P_k^\star\}$, $\{\l_{ij}^\star\}$.
	\end{algorithmic}
\end{algorithm}

\subsection{Convergence Analysis}\label{sec_converge}

\subsubsection{Inner-loop Convergence}
As discussed in Section~\ref{sec_solution_relaxed},  Lagrangian problem~(\ref{eq_problem_relaxed}) is solved iteratively by applying BCD and SCA techniques. We first show that the objective function value of Lagrangian problem~(\ref{eq_problem_relaxed}) converges to a finite value. We use  $q_\mathrm{pos}^\mathrm{lb}$ to represent the objective values of approximate positioning problem~(\ref{eq_problem_positioning_approx}). Denote $\mathbf{P}=\{ P_\mathrm{B}, \{P_k,\forall k\in\mathcal{K} \} \}$, $\mathbf{Q}=\{ \mathbf{x}, \{l_{ij},\forall i\in\mathcal{I},\forall j\in\mathcal{J}_i \} \}$. On one hand, in Line 6 of Algorithm~\ref{alg:overall_solution}, since the closed-form optimal solution of (\ref{eq_problem_PowerControl}) is obtained for given $\mathbf{Q}^{(t)}$, we have
\begin{equation}\label{converge1}
q_{\rm{U}}^{}(\mathbf{P}^{(t)}, \mathbf{Q}^{(t)}) \leq q_{\rm{U}}^{}(\mathbf{P}^{(t+1)}, \mathbf{Q}^{(t)}).
\end{equation}
On the other hand, for given $\mathbf{P}^{(t+1)}$ in Line 7 of Algorithm~\ref{alg:overall_solution}, we have
\begin{equation}\label{converge2}
\begin{aligned}
q_{\rm{U}}^{}(\mathbf{P}^{(t+1)}, \mathbf{Q}^{(t)}) 
&\overset{\text{(a)}}{=} q_\mathrm{pos}^\mathrm{lb}(\mathbf{P}^{(t+1)}, \mathbf{Q}^{(t)})\\
&\overset{\text{(b)}}{\leq} q_\mathrm{pos}^\mathrm{lb}(\mathbf{P}^{(t+1)}, \mathbf{Q}^{(t+1)})\\
&\overset{\text{(c)}}{\leq} q_{\rm{U}}^{}(\mathbf{P}^{(t+1)}, \mathbf{Q}^{(t+1)}),
\end{aligned}
\end{equation}
where (a) holds because the first-order Taylor expansions in (\ref{eq_l_lb}), (\ref{eq_rate_lb}) and (\ref{eq_rate_BS_lb}) are tight at $\mathbf{Q}^{(t)}$. (b) holds because $\mathbf{Q}^{(t+1)}$ is the optimal solution of problem~(\ref{eq_problem_positioning_approx}) for given $\mathbf{P}^{(t+1)}$ and $\mathbf{Q}^{(t)}$.
(c) holds because the objective value of problem~(\ref{eq_problem_positioning_approx}) is the lower bound on that of its original problem~(\ref{eq_problem_positioning}) at $\mathbf{Q}^{(t+1)}$, under small neighborhood constraint~(\ref{trustregion}). Based on (\ref{converge1}) and (\ref{converge2}), we have
\begin{equation}
	\begin{aligned}
		q_{\rm{U}}^{}(\mathbf{P}^{(t)},\mathbf{Q}^{(t)}) \leq q_{\rm{U}}^{}(\mathbf{P}^{(t+1)}, \mathbf{Q}^{(t+1)}),
	\end{aligned}
\end{equation}
which indicates that the objective value of problem~(\ref{eq_problem_relaxed}) is non-decreasing during the iteration. Since the objective value is upper bounded by a finite value, it finally converges to a finite value.

Besides, under small neighborhood constraint~(\ref{trustregion}), the update of $\mathbf{x}^{(t+1)}$ is restricted in a small neighborhood around the local point $\mathbf{x}^{(t)}$ with a decreasing radius $\rho^{(t)}$. Therefore, we have
\begin{equation}
	\lim_{t\rightarrow \infty} \|\mathbf{x}^{(t+1)}-\mathbf{x}^{(t)}\| \leq \lim_{t\rightarrow \infty} \rho^{(t)}=0.
\end{equation}
Consequently, the solution of Lagrangian problem~(\ref{eq_problem_relaxed}) converges to a point.

\subsubsection{Outer-loop Convergence}
{\color{black}
The outer-loop is designed to solve a dual problem with respect to the Lagrangian multipliers by using gradient method. The convergence depends on the optimality of the Lagrangian problem~(\ref{eq_problem_relaxed}) solved by the inner-loop iterations.  
If Lagrangian problem~(\ref{eq_problem_relaxed}) can obtain globally optimal solution for given Lagrangian multipliers $\{\lambda_i\}$, then the gradient method by using the updating formula (\ref{eq_update_lambda}) and the step size (\ref{eq_update_stepsize}) can guarantee the convergence of the dual problem~\cite{held1974valida,Fisher1981The}. However, Lagrangian problem~(\ref{eq_problem_relaxed}) is non-convex and may not always converge to the globally optimal solution. 
As a result, gradient directions may not be optimal and the algorithm may not converge~\cite{zhao1997thesur}. 
Under this circumstance, 
initial condition plays an important role. If the algorithm fails to converge under the default initial condition, it will terminate when the number of outer-loop iterations exceeds a maximum value $L_\mathrm{T,max}$. Then the algorithm can be re-performed with another initialization. According to the simulation results, initially setting a larger altitude facilitates the convergence of the algorithm for non-convergence cases. Specially, if we initialize the algorithm with an unblocked position (i.e., constraints (\ref{c:block1})-(\ref{c:block4}) are satisfied), fix the values of binary variables $\{l_{ij}\}$, and only optimize  $\mathbf{x}$, $P_\mathrm{B}$, and $\{P_k\}$ via Algorithm~\ref{alg:overall_solution}, then the proposed scheme is guaranteed to converge to a locally feasible solution for original problem~(\ref{eq_problem_equivalent}). Since the penalty term $\sum_{i \in \mathcal{I}} \lambda_i \sum_{j \in \mathcal{J}_i} l_{ij}(1-l_{ij})=0$ holds for any given $\{\lambda_i\}$, the algorithm will terminate after one outer-loop iteration. From a physical point of view, the fixed values of $\{l_{ij}\}$ confine that the UAV has to be deployed into a locally small unblocked region.
The convergence of the proposed solution will be evaluated via simulations in Section~\ref{sec_simulation}.  

}



\subsection{Computational Complexity}
In the proposed joint UAV positioning and power allocation algorithm, Line 1 in Algorithm~\ref{alg:overall_solution} needs to calculate $(M+KM)$ blocked regions, which entails a computational complexity of $\mathcal{O}\left(KM \right)$. The main computational complexity of Lines 5-10 comes from solving positioning sub-problem~(\ref{eq_problem_positioning_approx}), whose complexity is $\mathcal{O}\left( (KM)^{3.5} \right)$ by using the interior point method~\cite{boyd2004convex}.
Line 12 needs to estimate whether the solution of Lagrangian problem~(\ref{eq_problem_relaxed}) is feasible for the original problem~(\ref{eq_problem}), i.e., whether the UAV's position is out of all the $(M+KM)$ blocked regions, which entails a computational complexity of $\mathcal{O}\left( KM \right)$. As a result, the overall computational complexity of Algorithm~\ref{alg:overall_solution} is 
$\mathcal{O}\left(L_\mathrm{T,max} L_\mathrm{t,max} \left(KM \right)^{3.5} \right).$

{\color{black}
As we can see, the overall computational complexity comes from involving auxiliary variables $\{l_{ij}\}$ to tackle the $(M+KM)$ blocked regions, i.e., constraint~(\ref{c:block}). In practice, due to the hardware and power constraints, the coverage area of a single UAV relay is usually small. Thus, the number of blocked regions may not be very large. Even for a typical area of interest that may consist of a lager number of structures, the proposed solution can also be used by dividing the target area into multiple sub-regions and employing multiple UAV relays for coverage. With this implementation, the number of blocked regions is reduced. On the other hand, if a single UAV has to be utilized to cover a large area with many structures (although this rarely happens),
we may use the following way to decrease the computational complexity. 
Constraint~(\ref{c:block}) can be rewritten in equivalent form as $\mathbf{x} \in \mathcal{D} \setminus \left(\bigcup_{i \in \mathcal{I}} \mathcal{D}_i\right)$.
Note that if there exist a $j\in \mathcal{I}$ and a $k\in \mathcal{I}$ ($j\neq k$) such that $\mathcal{D}_j \subseteq \mathcal{D}_k$, then we have $\bigcup_{i \in \mathcal{I}} \mathcal{D}_i=\bigcup_{i \in \mathcal{I},i\neq j} \mathcal{D}_i$, which means that the $j$-th blocked region $\mathcal{D}_j$ is \emph{redundant} and can be removed from $\bigcup_{i \in \mathcal{I}} \mathcal{D}_i$. Therefore, we can first search all the \emph{redundant} sets through pairwise comparison and remove them from $\bigcup_{i \in \mathcal{I}} \mathcal{D}_i$. 
From a physical point of view, a UE is usually blocked by nearby buildings and the blockage of distant ones does not make sense.
In this manner, the number of \emph{effective} blocked regions may be smaller than $M+KM$, and the number of auxiliary variables $\{l_{ij}\}$ to be introduced is also decreased. Thus, the overall computational complexity can be reduced.
}

\section{Performance Evaluation}\label{sec_simulation}
In this section, we provide simulation results to evaluate the performance of the proposed joint UAV positioning and power allocation scheme for UAV relay systems aided by geographic information.

\subsection{Simulation Setup and Benchmark Schemes}\label{subsec_setup}
We consider a Manhattan-like dense urban area with size $500\times500\text{ m}^2$, i.e., $x_\mathrm{D}=500$, $y_\mathrm{D}=500$. The center of each building follows $(100K_x-50,100K_y-50)$, with $K_x=1,...,5,K_y=1,...,5$. The length and width of each building are random variables which follow a uniform distribution in a range which is properly determined to reach the desired $20\%$ building density, i.e., \emph{the ratio of built-up land area to the total land area}~\cite{Al-Hourani2014optima}. The heights of the buildings are random variables that follow a Rayleigh distribution with an average height of 23 meters, where the distribution function is interceptive between 3 meters to 50 meters. Thus, the minimum flight altitude of the UAV relay $h_\mathrm{min}$ is set to 50 meters. The coordinates of the BS antenna are set as $\mathbf{x}_\mathrm{B}=(0,0,25)$. 
In (\ref{c:block1}), a sufficiently large constant $C$  is set to 
\begin{equation*}
	\begin{aligned}
	C=\max\limits_{\forall i\in \mathcal{I}, \forall j\in \mathcal{J}_i}~ 5 \left(b_{ij}-\mathbf{a}_{ij}^T \mathbf{x}\right),~ \mbox{s.t.}~ \mathbf{x}\in \mathcal{D}.
	\end{aligned}
\end{equation*}
Auxiliary variable $l_{ij}$ is initialized as $l_{ij}^{(0)}=(|\mathcal{J}_i|-1)/|\mathcal{J}_i|,  \forall j\in \mathcal{J}_i, \forall i\in \mathcal{I}$. The position of UAV relay is initialized as $\mathbf{x}^{(0)}=(\frac{x_\mathrm{D}}{2},\frac{y_\mathrm{D}}{2},500)$. In this way, the initial values are a feasible solution to problem~(\ref{eq_problem_positioning_approx}). Other adopted simulation parameter settings are summarized in Table~\ref{tab:para}~\cite{Al-Hourani2014optima, Al-Hourani2014modeli}, unless specified otherwise. The UEs are randomly generated in the areas without buildings, and each point in the simulation figures is the average performance over 500 UE distributions. 
{\color{black}
If a simulation fails to converge under the default initial condition, it will be re-performed under the following new setup. First, an initial position where the UAV is outside the blocked regions is obtained by searching the lowest altitude with setting $(x_\mathrm{V},y_\mathrm{V})=(\frac{x_\mathrm{D}}{2},\frac{y_\mathrm{D}}{2})$. Then, the obtained values of binary variables $\{l_{ij}\}$ are fixed during optimization, and only $\mathbf{x}$, $P_\mathrm{B}$, and $\{P_k\}$ are optimized by adopting Algorithm~\ref{alg:overall_solution}.
}

\begin{table}[h]
	\caption{Simulation Parameters}\label{tab:para}
	\footnotesize
	\begin{center}
		\begin{tabular}{|c|l|c|}
			\hline
			\textbf{Parameter}                        &\multicolumn{1}{c|}{\textbf{Description}}                                       & \textbf{Value} \\
			\hline
			$P_{\mathrm{B}}^{\mathrm{tot}}$                                 & Maximum transmit power at the BS                                   & 30 dBm \\
\hline
			$P_{\mathrm{V}}^{\mathrm{tot}}$                                 & Maximum transmit power at the UAV                                   & 30 dBm \\
\hline
			$N_0$                           & Power spectral density of the noise                             & -174 dBm/Hz \\
			\hline
			$f_c$                       & Carrier frequency                                         & 5 GHz \\
			\hline
			$W_{\mathrm{U}}$                      & Channel bandwidth of each UAV-UE link                                      & 5 MHz \\
\hline
			$W_{\mathrm{B}}$                      & Channel bandwidth of the BS-UAV link                                        & $K\times 5$ MHz \\
\hline
			$\alpha_{\mathrm{1}}$                    & Channel gain exponent for LoS path                          & 2 \\
			\hline
			$\beta_{\mathrm{1}}$                  & \begin{tabular}[c]{@{}l@{}}Channel gain at the reference distance \\of 1 m for LoS path\end{tabular}                          & -46.43 dB \\
			\hline
			$\lambda_i^{(0)}$                                   & \begin{tabular}[l]{@{}l@{}}Initial value of the Lagrangian multiplier \end{tabular}                              & 1 \\
\hline
			$\rho^{(0)}$                                   & \begin{tabular}[l]{@{}l@{}}Initial radius of the spherical region in\\ constraint~(\ref{trustregion}) \end{tabular}                              & 50 \\
\hline
			$\kappa_1$                                   & \begin{tabular}[l]{@{}l@{}}Step size for the radius reduction in\\ constraint~(\ref{trustregion}) \end{tabular}                               & 0.9  \\
			\hline
			$\epsilon_{\mathrm{t}}^{}$                                 & \begin{tabular}[l]{@{}l@{}}Threshold for convergence of inner-loop \end{tabular}                  & 0.01 \\
			\hline
			$\epsilon_{\mathrm{T}}^{}$                                 & \begin{tabular}[l]{@{}l@{}}Threshold for convergence of outer-loop \end{tabular}                  & 0.01 \\
			\hline
			$L_\mathrm{t,max}$                               & \begin{tabular}[l]{@{}l@{}}Maximum iteration number for inner-loop \end{tabular}                  & 30\\
\hline
$L_\mathrm{T,max}$                                 & \begin{tabular}[l]{@{}l@{}}Maximum iteration number for outer-loop \end{tabular}                  & 10\\
\hline
		\end{tabular}
	\end{center}
\end{table}
The proposed method is labeled by ``LR". {\color{black}Besides, four benchmark schemes are defined for performance comparison, namely ``3D-ES", ``2D-ES'', ``CENTER" and ``FREE", respectively. }
\begin{itemize}
	\item {\color{black}\emph{3D-ES}}: This scheme performs an exhaustive search over a 3-D lattice with $5$ meter spacing. In each lattice, the optimal transmit powers given in Theorem~\ref{Theo_power} are adopted at the BS and the UAV relay. The candidate coordinates which achieve the maximum communication capacity for the actual channel environment are chosen as the optimal solution for UAV positioning. {\color{black}This scheme serves as a performance upper bound on the minimum communication capacity for the UAV relay system.}

	\item {\color{black}\emph{2D-ES}: This scheme performs an exhaustive search over a 2-D horizontal grid with 5 meter spacing at a fixed altitude $H$. While other settings are the same as these of ``3D-ES'' scheme. This scheme is to show the necessity of altitude optimization.
	}
	
	\item {\color{black}\emph{CENTER}: This scheme sets the horizontal position of the UAV relay to be the center of the area, i.e., $(x_\mathrm{V},y_\mathrm{V})=(\frac{x_\mathrm{D}}{2},\frac{y_\mathrm{D}}{2})$ and performs an exhaustive search along the vertical plane to find the lowest altitude where the UAV relay is outside blocked regions. Then the optimal transmit powers given in Theorem~\ref{Theo_power} are adopted. This scheme is to show the necessity of horizontal position optimization.}
	
	\item {\color{black}\emph{FREE}: 
		This scheme assumes that  geographic information is not available for UAV positioning, and a joint 2-D positioning and power allocation is performed with a fixed altitude $H$. In other worlds, the blockage effect of buildings, i.e., constraint~(\ref{c:block}), is not considered in the problem. The problem is solved by applying BCD and SCA techniques, i.e., a similar manner shown in Lines 4-10 in Algorithm~\ref{alg:overall_solution}. The communication capacity is then calculated according to the actual LoS/NLoS channel environment at the obtained UAV position. This scheme is to verify the significance of geographic information for UAV positioning.}	
\end{itemize}

{\color{black}The proposed scheme can ensure that the obtained UAV position is out of all the blocked regions. However, the position obtained by other benchmark schemes may be blocked by buildings. NLoS channels are required for such circumstance. Therefore, for performance comparison, the NLoS channel gains of BS-UAV and UAV-UE links are respectively given by
\begin{equation}\label{eq_channel_BS_NLoS}
	g_{\mathrm{B}}^{}=
	\beta_2 \|\mathbf{x}-\mathbf{x}_\mathrm{B}\|^{-\alpha_2}, 
\end{equation}
\begin{equation}\label{eq_channel_User_NLoS}
	g_k=
	\beta_2 \|\mathbf{x}-\mathbf{x}_k\|^{-\alpha_2}, 
\end{equation}
where $\alpha_2$ is the path loss exponent, and $\beta_2$ is the channel gain at reference distance of $1$ m. We set the parameters as $\alpha_{\mathrm{2}}=3.3$, and $\beta_{\mathrm{2}}=-56.43$ dB. We adopt the LoS channel models
in (\ref{eq_channel_BS}) and (\ref{eq_channel_User}), where an LoS path exists for the BS-UAV and UAV-UE links if the UAV is located out of the blocked regions. Otherwise, NLoS channels in (\ref{eq_channel_BS_NLoS}) and (\ref{eq_channel_User_NLoS}) are employed.}



\subsection{Simulation Results}\label{subsec_result}
First, we provide a demonstration for the optimization process of the UAV positioning in Fig.~\ref{fig:demo}, where 16 UEs are randomly distributed in the areas without buildings. The star labeled \emph{``Start"} is the initial position of the UAV, and the star labeled \emph{``End"} is the final position. Other stars represent the positions during the iteration. Different line types are used to distinguish the steps of outer-loops. It can be observed that the solution converges within three outer iterations. At the beginning of the iteration, with small initial Lagrangian multipliers, the UAV relay moves toward the ideal position without considering the existence of building blockages. As we can see, the UAV tends to decrease its height so as to get close to the ideal position. As the first outer-loop ends, the violation of blockage constraints leads to larger Lagrangian multipliers. Then, in the following outer-loop, the blockage constraints make sense and the UAV has to increase its height as well as adjust its horizontal position to avoid blockages. Finally, the position of the UAV relay is located outside blocked regions.
\begin{figure}[t]
	\begin{center}
		\includegraphics[width=\figwidth cm]{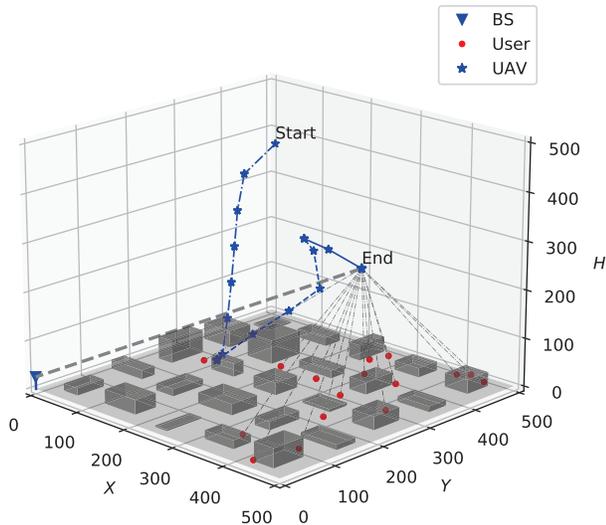}
		\caption{{\color{black}Demonstration for UAV positioning with 16 UEs.}}
		\label{fig:demo}
	\end{center}
	\vspace*{10pt}
\end{figure}


Fig.~\ref{fig:converge_out} demonstrates the convergence of the proposed Algorithm~\ref{alg:overall_solution} versus the number of outer-loop iterations  for different numbers of UEs. {\color{black}Simulation results show that the algorithm converges well for the vast majority of cases under the default initialization given in Section~\ref{subsec_setup} ($100\%$ for 1 UE, $98.8\%$ for 8 UEs, and $93.6\%$ for 32 UEs).}
As can be observed, the duality gap between the objective value of Lagrangian problem~(\ref{eq_problem_relaxed}) $q_\mathrm{U}^{}$ and the objective value of original problem~(\ref{eq_problem_equivalent}) $q_\mathrm{B}^{}$ decreases with the iteration and converges within 5 iterations for all settings. The convergence of the outer-loop iteration means that appropriate values of the Lagrangian multipliers are obtained to ensure the UAV relay out of the blocked regions.
\begin{figure}[t]
	\begin{center}
		\includegraphics[width=\figwidth cm]{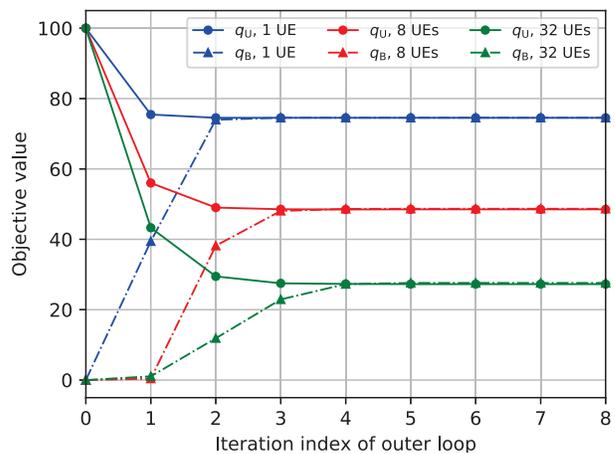}
		\caption{{\color{black}Evaluation of the convergence of the outer-loop iteration of the proposed Algorithm~\ref{alg:overall_solution} for different numbers of UEs.}}
		\label{fig:converge_out}
	\end{center}
\end{figure}

\begin{figure}[t]
	\begin{center}
		\includegraphics[width=\figwidth cm]{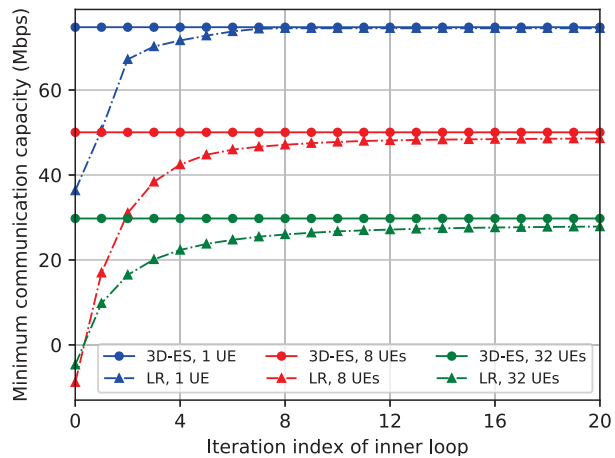}
		\caption{{\color{black}Evaluation of the convergence of the inner-loop iteration of the proposed Algorithm~\ref{alg:overall_solution} for different numbers of UEs.}}
		\label{fig:converge_in}
	\end{center}
\end{figure}
In Fig.~\ref{fig:converge_in}, we evaluate the convergence of the last inner-loop iteration of Algorithm~\ref{alg:overall_solution} for different numbers of UEs. For comparison, the results of ``3D-ES" scheme are taken as upper bounds. It can be observed that the proposed method converges to a value close to the upper bound within 20 iterations for all settings. As the number of UEs increases, the gap increases slightly (0.25 Mbps for 1 UE, 1.34 Mbps for 8 UEs, and 1.75 Mbps for 32 UEs). This is because the unblocked region becomes narrow and decentralized with the increasing of the number of UEs, leading to multiple locally optimal locations. Even though, our algorithm achieves more than $94\%$ capacity performance of the upper bound ($99.7\%$ for 1 UE, $97.3\%$ for 8 UEs, and $94.1\%$ for 32 UEs). The results confirm that with the optimized Lagrangian multipliers, the inner-loop for Lagrangian problem~(\ref{eq_problem_relaxed}) can not only avoid blocked regions, but also obtain a near-optimal solution for original problem~(\ref{eq_problem_equivalent}).

Fig.~\ref{fig:RvsUsers} compares the minimum communication capacities for different schemes versus the number of UEs. From the results, the performance of the proposed joint positioning and power allocation method  aided by geographic information is very close to the performance upper bound given by ``3D-ES" scheme, and outperforms all other benchmark schemes. In addition, as the number of UEs increases, the minimum communication capacity decreases. The reason is as follows. On one hand, the maximum transmit power at the BS $P_{\mathrm{B}}^{\mathrm{tot}}$ and at the UAV relay $P_{\mathrm{V}}^{\mathrm{tot}}$ are constant. As $K$ increases, the transmit power that can  be potentially allocated to each UE is reduced, and thus leads to lower capacity. On the other hand, as $K$ increases, more blocked regions are involved. The UAV relay tends to be deployed at a higher altitude to avoid blockages, and thus leads to higher path loss and lower capacity. {\color{black} 
The performance of ``2D-ES'' deteriorates dramatically as the number of UEs increases. The reason is as follows. On one hand, the fixed altitude $H$ may not be the optimal altitude when the number of UEs is small, and on the other hand, there may not exist any unblocked region when the number of UEs is large. 
For ``CENTER'' scheme, the UAV relay has to be deployed at a very high altitude to avoid blockages, and thus leads to higher path loss and lower capacity. 
Finally, without geographic information, blockage effect cannot be properly considered during optimization, and thus ``FREE'' scheme fails to guarantee practical communication performance and gets the worst results.}

\begin{figure}[t]
	\begin{center}
		\includegraphics[width=\figwidth cm]{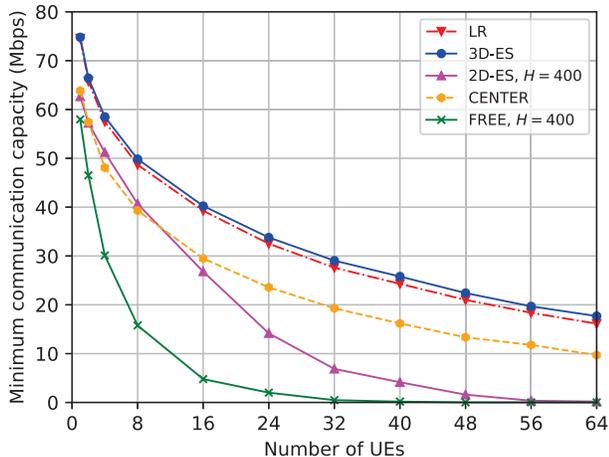}
		\caption{{\color{black}Minimum communication capacities for different schemes versus the number of UEs.}}
		\label{fig:RvsUsers}
	\end{center}
\end{figure}

Fig.~\ref{fig:RvsPbs} compares the minimum communication capacities for different schemes versus the maximum transmit powers at the BS ($P_{\mathrm{B}}^{\mathrm{tot}}$), where the number of UEs is 8. 
As can be observed again, the proposed method achieves a performance close to the upper bound and outperforms all other benchmark schemes. As $P_{\mathrm{B}}^{\mathrm{tot}}$ increases, the minimum communication capacity of the proposed method improves, but with a decreased rate of improvement. The reason is as follows. When $P_{\mathrm{B}}^{\mathrm{tot}}$ is small, the overall capacity is limited by the BS-UAV link. The UAV has to be positioned close to the BS. As $P_{\mathrm{B}}^{\mathrm{tot}}$ increases, the limitation of the BS-UAV link is lightened, and the UAV positioning has more freedom to get a better performance. When $P_{\mathrm{B}}^{\mathrm{tot}}$ is sufficiently large, the capacity of the BS-UAV link does not restrict the system performance, which, however, is limited by the UAV-UE links for fixed $P_{\mathrm{V}}^{\mathrm{tot}}$. Therefore, the capacity does not increase as $P_{\mathrm{B}}^{\mathrm{tot}}$ is larger than a certain threshold.

\begin{figure}[t]
	\begin{center}
		\includegraphics[width=\figwidth cm]{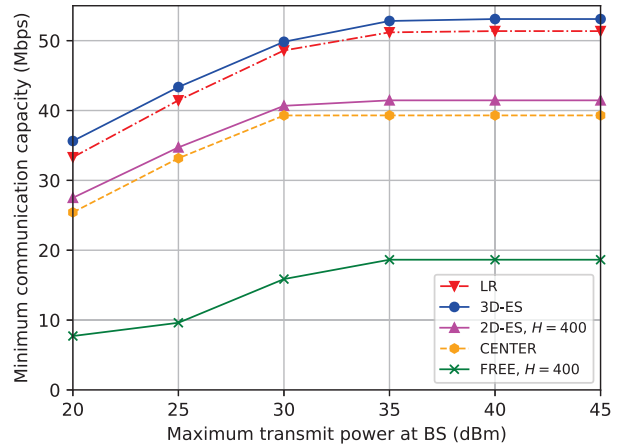}
		\caption{{\color{black}Minimum communication capacities for different schemes versus BS transmit powers.}}
		\label{fig:RvsPbs}
	\end{center}
\end{figure}

Fig.~\ref{fig:RvsPuav} compares the minimum communication capacities for different schemes versus the maximum transmit powers at the UAV ($P_{\mathrm{V}}^{\mathrm{tot}}$), where the number of UEs is 8. The proposed method still achieves a performance close to the upper bound and outperforms all the other benchmark schemes. When $P_{\mathrm{V}}^{\mathrm{tot}}$ is small, the communication capacity is limited by the UAV-UE links. As $P_{\mathrm{V}}^{\mathrm{tot}}$ increases, the capacity of UAV-UE links becomes larger, thus leading to a better performance of the system. Note that the ``FREE" scheme gets even worse performance as $P_{\mathrm{V}}^{\mathrm{tot}}$ increases from 30 dBm to 35 dBm. The reason is as follows. 
When $P_{\mathrm{V}}^{\mathrm{tot}}$ increases from 30 dBm to 35 dBm, the overall capacity is limited by the BS-UAV link. The UAV has to be positioned close to the BS, where UAV-UE links are more likely to be blocked by buildings. As $P_{\mathrm{V}}^{\mathrm{tot}}$ continues to increase, even though the UAV-UE link is still blocked, the signal-to-noise ratio (SNR) of the UE receiver through NLoS link increases, thus leading to better performance. In contrast, the proposed method takes the blockage effect into account during the optimization and always obtains satisfactory performance for difference transmit powers. The results show the significance of the geographic information for the UAV relay systems.

\begin{figure}[t]
	\begin{center}
		\includegraphics[width=\figwidth cm]{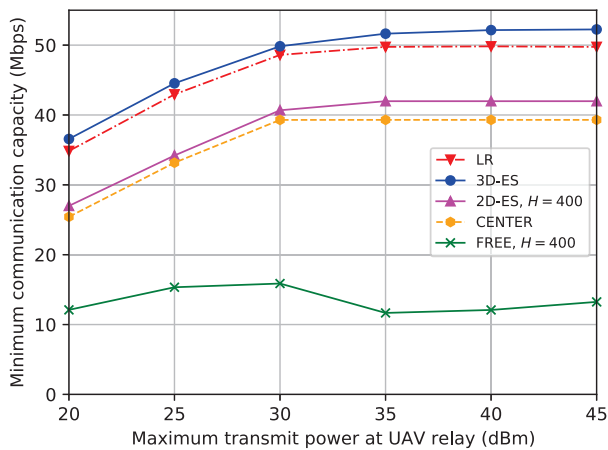}
		\caption{{\color{black}Minimum communication capacities for different schemes versus UAV transmit powers.}}
		\label{fig:RvsPuav}
	\end{center}
\end{figure}

\begin{figure}[t]
	\begin{center}
		\includegraphics[width=\figwidth cm]{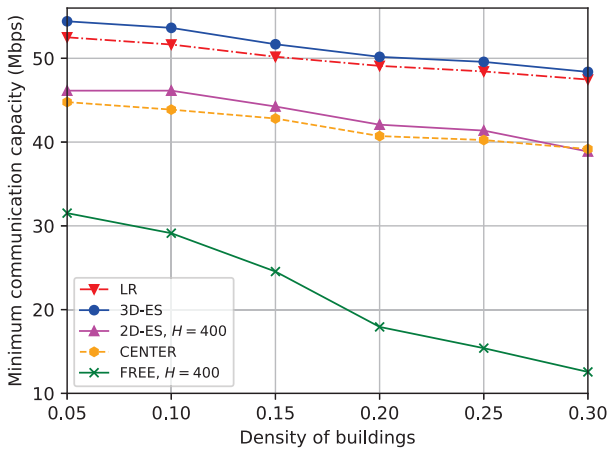}
		\caption{{\color{black}Minimum communication capacities for different schemes versus building density.}}
		\label{fig:RvsDensity}
	\end{center}
\end{figure}

Finally, we compare the minimum communication capacity for different schemes versus the density of buildings in Fig.~\ref{fig:RvsDensity}, where the number of UEs is 8. The length and width of each building are random variables which follow a uniform distribution in a range which is properly determined to reach the desired building density. As can be observed, the minimum communication capacities for the considered schemes all decrease as the density of building increases. This is because denser buildings lead to more severe blockage effect. The UAV relay has to adjust its horizontal position and increase its altitude to avoid blockages. Even under this scenario, the proposed method still obtains a performance  close to the upper bound and outperforms all other benchmark schemes. 
{\color{black}Besides,  the decreasing rate of the capacity obtained by the proposed method is similar to that of ``3D-ES'' scheme}, which suggests that the proposed method can efficiently get satisfying performance with the aid of geographic information for different building density circumstances.

\section{Conclusion}\label{sec_conclusion}

In this paper, we proposed to employ a UAV relay to improve the minimum communication capacity from a ground BS to multiple UEs. Assisted by geographic information, the blockage effect caused by buildings was modeled. A joint optimization problem was formulated for the UAV positioning and power allocation  under the constraints of link capacity, maximum transmit power, and blockage, to maximize the minimum communication capacity among all the UEs. To solve this non-convex problem, we proposed to leverage Lagrangian relaxation and performed a two-loop optimization framework. In the outer-loop, the Lagrange multipliers were adaptively updated to ensure the Lagrangian problem converge to the tightest upper bound on the original problem. In the inner-loop, the Lagrangian problem was solved by applying the BCD technique, where UAV positioning and power allocation were alternately solved. Simulation results demonstrated that the proposed joint positioning and power allocation method aided by geographic information can closely approach a performance upper bound and outperforms three benchmark schemes in terms of the minimum communication capacity. The proposed blockage modeling method and the optimization framework have general significance for UAV positioning aided by geographic information.

\appendices

{\color{black}
\section{Proof of Lemma \ref{Lemma_LR_upperbound}}\label{app_LR_upperbound}
By replacing binary constraint~(\ref{c:block2}) with constraints~(\ref{c:lij1}) and (\ref{c:lij2}), and subtracting the term $\sum_{i \in \mathcal{I}} \lambda_i \sum_{j \in \mathcal{J}_i} l_{ij}(1-l_{ij})$ to the objective function of original problem~(\ref{eq_problem_equivalent}), we obtain 
\begin{align}
	\max\limits_{\mathbf{x}, P_\mathrm{B}, \{P_k\}, \{l_{ij}\}}~~ & R-\sum_{i \in \mathcal{I}} \lambda_i \sum_{j \in \mathcal{J}_i} l_{ij}(1-l_{ij})   \label{eq_problem_middle} \\
	\mbox{s.t.}~~ &\lambda_i \geq 0, \forall i \in \mathcal{I},  \notag \\
	& \text{(\ref{c:rate1}), (\ref{c:rate2}), (\ref{c:p1}), (\ref{c:p2}), (\ref{c:p3}), }   \notag  \\
	&\text{(\ref{c:block1}), (\ref{c:block3}), (\ref{c:block4}), (\ref{c:lij1}), (\ref{c:lij2})}.   \notag 
\end{align}

It can be seen that problem~(\ref{eq_problem_middle}) has the same feasible region as that of original problem~(\ref{eq_problem_equivalent}) as constraints (\ref{c:lij1}) and (\ref{c:lij2}) are equivalent to (\ref{c:block2}). Note that constraints~(\ref{c:lij1}) and (\ref{c:lij2}) together confine the value of any in $\{l_{ij}\}$ to either $0$ or $1$. Then, $\{l_{ij}(1-l_{ij})=0\}$ holds for any feasible solution of problem~(\ref{eq_problem_middle}) (which is also feasible for original problem~(\ref{eq_problem_equivalent})). Therefore, the optimal value of problem~(\ref{eq_problem_middle}) for any given Lagrangian multipliers is the same as that of original problem~(\ref{eq_problem_equivalent}).

Further, by removing constraint~(\ref{c:lij2}) from problem~(\ref{eq_problem_middle}), we obtain Lagrangian problem~(\ref{eq_problem_relaxed}). Since removing constraint~(\ref{c:lij2}) expands the feasible region and does not change the objective function,
the optimal value of Lagrangian problem~(\ref{eq_problem_relaxed}) for any given Lagrangian multipliers is an upper bound on that of problem~(\ref{eq_problem_middle}), and is also an upper bound on that of original problem~(\ref{eq_problem_equivalent}).

This completes the proof.}

\section{Proof of Theorem \ref{Theo_power}}\label{app_power}
First, to maximize the minimum communication capacity $R$, at least one of the transmit powers at the BS and UAV relay must reach the maximum possible value. The reason is as follows. Assume that the optimal transmit powers at the BS and UAV relay are both smaller than their maximum values, i.e., 
$P_{\mathrm{B}}^{\star} < P_{\mathrm{B}}^{\mathrm{tot}}$ and $ \sum_{k\in \mathcal{K}} P_{k}^{\star} < P_{\mathrm{V}}^{\mathrm{tot}}$. Then, there exists a small positive value $\delta$ which ensures $(1+\delta)P_{\mathrm{B}}^{\star}$ and  $\sum_{k\in \mathcal{K}} (1+\delta) P_{k}^{\star}$ do not exceed the maximum values of their transmit powers. It can be verified that $\left( (1+\delta)P_{\mathrm{B}}^{\star}, \{(1+\delta) P_{k}^{\star}\} \right)$ yield a larger minimum communication capacity than $\left(P_{\mathrm{B}}^{\star}, \{P_{k}^{\star}\} \right)$, which contradicts the assumption that  $\left(P_{\mathrm{B}}^{\star}, \{P_{k}^{\star}\} \right)$ is optimal. 

Then, we show that with  $\sum_{k\in \mathcal{K}} P_{k}\triangleq  P_{\mathrm{V}}\leq P_{\mathrm{V}}^{\mathrm{tot}}$, the optimal power allocation $\{P_k^\star\}$ to maximize the minimum communication capacity among all the UAV-UE link satisfies
\begin{equation}\label{eq_UAV_power}
	\eta_1^{(t)} P_1^\star  = \eta_2^{(t)} P_2^\star= ... = \eta_K^{(t)} P_K^\star.	
\end{equation}
The reason is as follows. Assume that (\ref{eq_UAV_power}) is not the optimal solution. Then at least one equality relationship in (\ref{eq_UAV_power}) is not true. By sorting $\{P_k^\star\}$ in ascending order, they follow
\begin{equation}\label{eq_UAV_power_ineq}
	\eta_{\pi_1}^{(t)} P_{\pi_1}^\star =... = \eta_{\pi_i}^{(t)}P_{\pi_{i}}^\star < \eta_{\pi_{i+1}}^{(t)} P_{\pi_{i+1}}^\star \leq ... \leq \eta_{\pi_K}^{(t)} P_{\pi_K}^\star,
\end{equation}
where $\{\pi_k\}$ are the indices after the proper permutation, and the first strict inequality occurs after the $\pi_i$-th term. Then, there exists a positive value $\Delta$ which ensures that 
\begin{align}
	&\eta_{\pi_1}^{(t)} (P_{\pi_1}^\star+\Delta) =... = \eta_{\pi_i}^{(t)}(P_{\pi_{i}}^\star+\Delta)= \notag\\ &\eta_{\pi_{i+1}}^{(t)} (P_{\pi_{i+1}}^\star-i\Delta) \leq ... \leq \eta_{\pi_K}^{(t)} P_{\pi_K}^\star, 
\end{align}
and yields a larger minimum communication capacity than $\{P_{k}^{\star}\}$, which contradicts the assumption that $\{P_{k}^{\star}\}$ is the optimal solution. Thus, (\ref{eq_UAV_power}) always holds for the optimal solution. 

On one hand, the minimum communication capacity constrained by the BS-UAV link is 
\begin{align}
	\widetilde{R}=\frac{W_{\mathrm{B}}^{}}{K} \log_2\left(1+\eta_\mathrm{B}^{(t)} P_{\mathrm{B}}^{}\right).
\end{align}
On the other hand, combining $\sum_{k\in \mathcal{K}} P_{k} = P_{\mathrm{V}}^{}$ with (\ref{eq_UAV_power}), the minimum communication capacity constrained by the UAV-UE link is
\begin{align}
	\widehat{R}=W_{\mathrm{U}}^{} \log_2\left(1+\eta_\mathrm{V}^{(t)} P_{\mathrm{V}}^{} \right), 
\end{align}
and the corresponding power allocation is $\widehat{P}_k=\eta_\mathrm{V}^{(t)} P_{\mathrm{V}}^{}/\eta_k^{(t)}$. 

When $\left(1+\eta_\mathrm{B}^{(t)} P_{\mathrm{B}}^{\mathrm{tot}}\right)^{\frac{W_{\mathrm{B}}^{}}{K}}< \left(1+\eta_\mathrm{V}^{(t)} P_{\mathrm{V}}^{\mathrm{tot}} \right)^{W_{\mathrm{U}}^{}}$, we have $\widetilde{R}<\widehat{R}$. Thus, $P_{\mathrm{B}}^{\star}=P_{\mathrm{B}}^{\mathrm{tot}}$ maximizes the communication capacity of the BS-UAV link. Meanwhile, to avoid the waste of transmit power, $P_{\mathrm{V}}$ should be reduced to 
\begin{align}
P_{\mathrm{V}}^\star =\left( \left(1+\eta_\mathrm{B}^{(t)} P_{\mathrm{B}}^{\mathrm{tot}}\right)^{\frac{W_{\mathrm{B}}^{}}{KW_{\mathrm{U}}}}-1 \right)/\eta_\mathrm{V}^{(t)}.
\end{align}
Thus, the optimal power allocation at the UAV relay is 
\begin{align}
P_{k}^{\star}=\left( \left(1+\eta_\mathrm{B}^{(t)} P_{\mathrm{B}}^{\mathrm{tot}}\right)^{\frac{W_{\mathrm{B}}^{}}{KW_{\mathrm{U}}}}-1 \right)/\eta_k^{(t)}.
\end{align}

When $\left(1+\eta_\mathrm{B}^{(t)} P_{\mathrm{B}}^{\mathrm{tot}}\right)^{\frac{W_{\mathrm{B}}^{}}{K}}\geq \left(1+\eta_\mathrm{V}^{(t)} P_{\mathrm{V}}^{\mathrm{tot}} \right)^{W_{\mathrm{U}}^{}}$, we have $\widetilde{R} \geq \widehat{R}$. Thus, $P_{\mathrm{V}}^{\star}=P_{\mathrm{V}}^{\mathrm{tot}}$ maximizes the communication capacity of the UAV-UE link. Meanwhile, to avoid the waste of transmit power, $P_{\mathrm{B}}$ should be reduced to 
\begin{align}
P_{\mathrm{B}}^\star = \left( \left(1+\eta_\mathrm{V}^{(t)} P_{\mathrm{V}}^{\mathrm{tot}} \right)^{\frac{KW_{\mathrm{U}}^{}}{W_{\mathrm{B}}^{}}} - 1 \right) / \eta_\mathrm{B}^{(t)}.
\end{align}
In addition, the optimal power allocation of the UAV relay is
\begin{align}
P_k^\star=\eta_\mathrm{V}^{(t)}P_{\mathrm{V}}^{\mathrm{tot}}/\eta_k^{(t)}.
\end{align}

This completes the proof.

\bibliographystyle{IEEEtran} 

\vspace{11pt}

\begin{IEEEbiography}[{\includegraphics[width=1in,height=1.25in,clip,keepaspectratio]{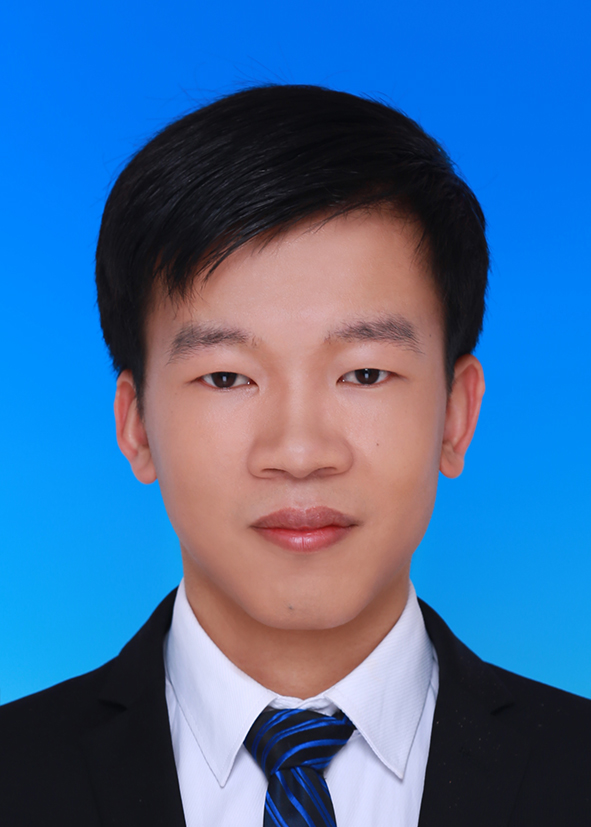}}]{Pengfei Yi}
	 received the B.E. degree from the Department of Electronics and Information, Northwestern
	Polytechnical University, Xian, China, in 2016, and the M.E. degree from the Department of Information and electronics, Beijing Institute of Technology, Beijing, China, in 2018. He is currently pursuing the Ph.D. degree with the Department of Electronics and Information Engineering, Beihang University, Beijing, China. His research interests include UAV communications and networking.
\end{IEEEbiography}
\vspace{11pt}
\begin{IEEEbiography}[{\includegraphics[width=1in,height=1.25in,clip,keepaspectratio]{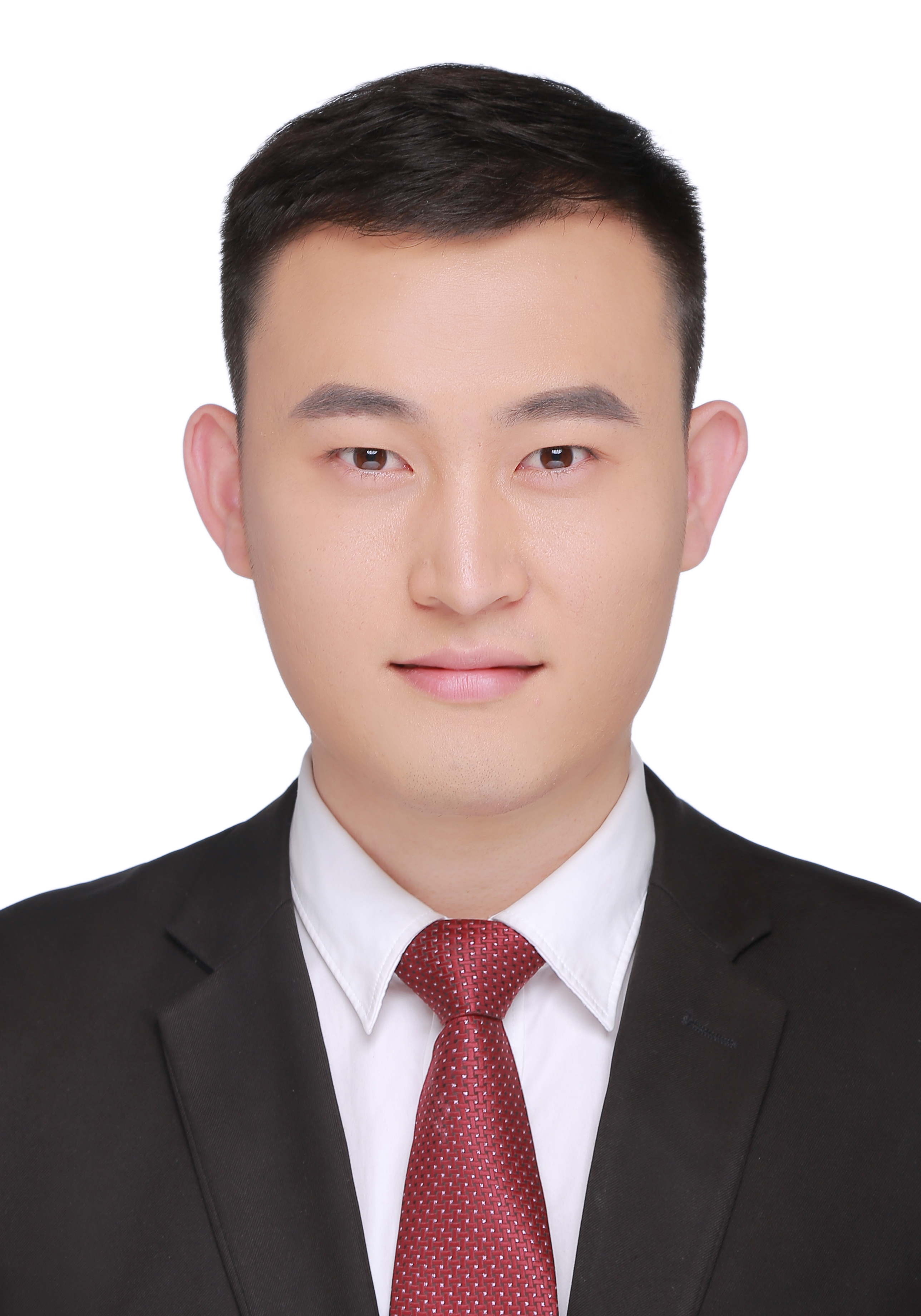}}]{Lipeng Zhu}
	(Member, IEEE) received the B.S. degree in the Department of Mathematics and System Sciences from Beihang University in 2017, and the Ph.D. degree in the Department of Electronic and Information Engineering from Beihang University in 2021. He is currently a Research Fellow with the Department of Electrical and Computer Engineering, National University of Singapore. His research interest is millimeter-wave communications, non-orthogonal multiple access, and UAV communications.
\end{IEEEbiography}
\vspace{11pt}
\begin{IEEEbiography}[{\includegraphics[width=1in,height=1.25in,clip,keepaspectratio]{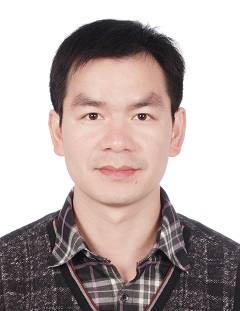}}]{Zhenyu Xiao}
	(M'11, SM'17) received the B.E. degree with the Department of Electronics and Information Engineering, Huazhong University of Science and Technology, Wuhan, China, in 2006, and the Ph.D. degree with the Department of Electronic Engineering, Tsinghua University, Beijing, China, in 2011. From 2011 to 2013, he held a postdoctorial position with the Department of Electronic Engineering, Tsinghua University. He was with the School of Electronic and Information Engineering, Beihang University, Beijing, as a Lecturer, from 2013 to 2016, and an Associate Professor from 2016 to 2020, where he is currently a Full Professor. He has visited the University of Delaware from 2012 to 2013 and the Imperial College London from 2015 to 2016.
	
	Dr. Xiao has authored or coauthored over 70 papers including IEEE JSAC, IEEE TWC, IEEE TSP, IEEE TVT, IEEE COMML, IEEE WCL, IET COMM etc. He has been a TPC member of IEEE GLOBECOM, IEEE WCSP, IEEE ICC, IEEE ICCC etc. He is currently an associate editor for IEEE Transactions on Cognitive Communications and Networking, China Communications, IET Communications, KSII transactions on Internet and Information Systems, Frontiers in Communications and Networks. He has also been the Lead Guest Editor for the Special Issue on Antenna Array Enabled Space/Air/Ground Communications and Networking of IEEE Journal on Selected Areas in Communications, one named Space-Air-Ground Integrated Network with Native Intelligence (NI-SAGIN): Concept, Architecture, Technology, and Radio of China Communications, and one named LEO Satellite Constellation Networks of Frontiers in Communications and Networks. Dr. Xiao has received 2017 Best Reviewer Award of IEEE TWC, 2019 Exemplary Reviewer Award of IEEE WCL, and the 4th China Publishing Government Award. He has received the Second Prize of National Technological Invention, the First Prize of Technical Invention of China Society of Aeronautics and Astronautics, and the Second Prize of Natural Science of China Electronics Society. Dr Xiao is an active researcher with broad interests on millimeter wave communications, UAV/satellite communications and networking, etc. He is an IEEE senior member, and was elected as one of 2020 highly cited Chinese researchers.
\end{IEEEbiography}
\vspace{11pt}
\begin{IEEEbiography}[{\includegraphics[width=1in,height=1.25in,clip,keepaspectratio]{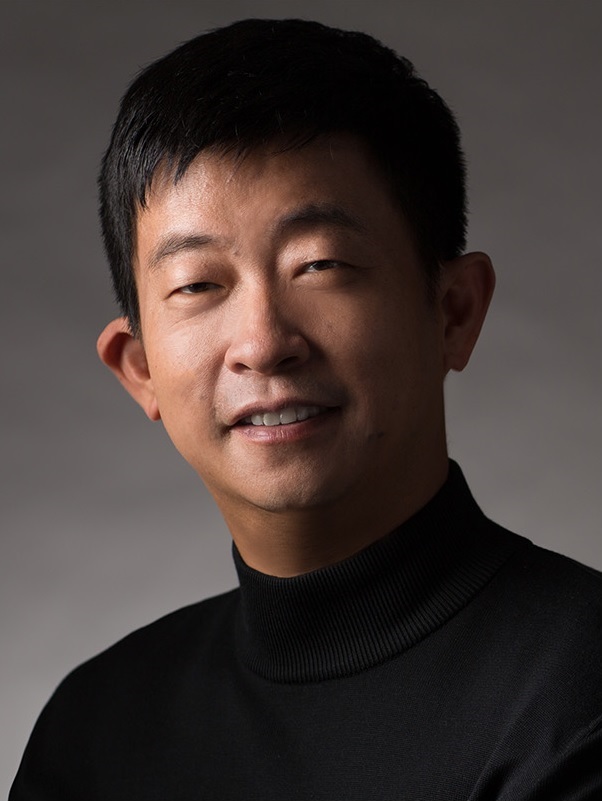}}]{Zhu Han}
	 (S’01–M’04-SM’09-F’14) received the B.S. degree in electronic engineering from Tsinghua University, in 1997, and the M.S. and Ph.D. degrees in electrical and computer engineering from the University of Maryland, College Park, in 1999 and 2003, respectively. 
	
	From 2000 to 2002, he was an R\&D Engineer of JDSU, Germantown, Maryland. From 2003 to 2006, he was a Research Associate at the University of Maryland. From 2006 to 2008, he was an assistant professor at Boise State University, Idaho. Currently, he is a John and Rebecca Moores Professor in the Electrical and Computer Engineering Department as well as in the Computer Science Department at the University of Houston, Texas. His research interests include wireless resource allocation and management, wireless communications and networking, game theory, big data analysis, security, and smart grid.  Dr. Han received an NSF Career Award in 2010, the Fred W. Ellersick Prize of the IEEE Communication Society in 2011, the EURASIP Best Paper Award for the Journal on Advances in Signal Processing in 2015, IEEE Leonard G. Abraham Prize in the field of Communications Systems (best paper award in IEEE JSAC) in 2016, and several best paper awards in IEEE conferences. Dr. Han was an IEEE Communications Society Distinguished Lecturer from 2015-2018, AAAS fellow since 2019, and ACM distinguished Member since 2019. Dr. Han is a 1\% highly cited researcher since 2017 according to Web of Science. Dr. Han is also the winner of the 2021 IEEE Kiyo Tomiyasu Award, for outstanding early to mid-career contributions to technologies holding the promise of innovative applications, with the following citation: ``for contributions to game theory and distributed management of autonomous communication networks."
\end{IEEEbiography}
\vspace{11pt}
\begin{IEEEbiography}[{\includegraphics[width=1in,height=1.25in,clip,keepaspectratio]{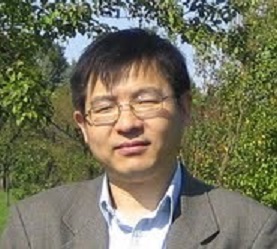}}]{Xiang-Gen Xia}
 (M'97, SM'00, F'09) received his B.S. degree in mathematics from Nanjing Normal University, Nanjing, China, and his M.S. degree in mathematics from Nankai University, Tianjin, China, and his Ph.D. degree in electrical engineering from the University of Southern California, Los Angeles, in 1983, 1986, and 1992, respectively.

He was a Senior/Research Staff Member at Hughes Research Laboratories, Malibu, California, during 1995-1996. In September 1996, he joined the Department of Electrical and Computer Engineering, University of Delaware, Newark, Delaware, where he is the Charles Black Evans Professor. His current research interests include space-time coding, MIMO and OFDM systems, digital signal processing, and SAR and ISAR imaging. Dr. Xia is the author of the book Modulated Coding for Intersymbol Interference Channels (New York, Marcel Dekker, 2000).

Dr. Xia received the National Science Foundation (NSF) Faculty Early Career Development (CAREER) Program Award in 1997, the Office of Naval Research (ONR) Young Investigator Award in 1998, and the Outstanding Overseas Young Investigator Award from the National Nature Science Foundation of China in 2001. He also received the Outstanding Junior Faculty Award of the Engineering School of the University of Delaware in 2001. He is currently serving and has served as an Associate Editor for numerous international journals including IEEE Wireless Communications Letters, IEEE Transactions on Signal Processing, IEEE Transactions on Wireless Communications, IEEE Transactions on Mobile Computing, and IEEE Transactions on Vehicular Technology. Dr. Xia is Technical Program Chair of the Signal Processing Symp., Globecom 2007 in Washington D.C. and the General Co-Chair of ICASSP 2005 in Philadelphia.

\end{IEEEbiography}
\end{document}